
\documentclass[sigconf]{acmart}


\usepackage[noend,ruled,vlined,linesnumbered]{algorithm2e}
\SetProcSty{textnormal}
\SetArgSty{textnormal}
\SetFuncSty{textnormal}
\SetProgSty{textnormal}
\SetInd{0.0em}{0.6em}

\usepackage{enumitem}
\usepackage{graphicx}
\usepackage{subfigure} 
\usepackage{mathtools}
\usepackage{wrapfig}
\usepackage{amsfonts}
\usepackage{amsmath}
\usepackage{array}
\usepackage{natbib}
\usepackage{units}
\usepackage{xcolor}
\usepackage{xspace}
\newcommand{\mA}{\mathbf{A}}

\newcommand{\mC}{\mathbf{C}}

\newcommand{\mM}{\mathbf{M}}
\newcommand{\mB}{\mathbf{B}}

\newcommand{\vx}{\boldsymbol{x}}
\newcommand{\vb}{\boldsymbol{b}}
\newcommand{\ATA}{\textsc{AtA}\xspace}
\newcommand{\ATAP}{\textsc{AtA-D}\xspace}
\newcommand{\ATAS}{\textsc{AtA-S}\xspace}
\newcommand{\ATAnaive}{\textsc{AtANaive}\xspace}
\newcommand{\Strassen}{\textsc{FastStrassen}\xspace}
\newcommand{\TaskTree}{\mathcal{T}}

\newtheorem{proposition}{Proposition}[section]


\AtBeginDocument{%
  \providecommand\BibTeX{{%
    \normalfont B\kern-0.5em{\scshape i\kern-0.25em b}\kern-0.8em\TeX}}}

\begin{document}

\title{Efficiently Parallelizable Strassen-Based Multiplication of a Matrix by its Transpose}


\author{Viviana Arrigoni}
\authornote{Both authors contributed equally to this research.}
\email{arrigoni@di.uniroma1.it}
\affiliation{%
  \institution{Department of Computer Science,\\ Sapienza University of Rome}
 \city{Rome}
  \country{Italy}
}
\author{Filippo Maggioli}
\authornotemark[1]
\email{maggioli@di.uniroma1.it}
\affiliation{%
  \institution{Department of Computer Science,\\ Sapienza University of Rome}
 \city{Rome}
  \country{Italy}
}
\author{Annalisa Massini}
\email{massini@di.uniroma1.it}
\affiliation{%
  \institution{Department of Computer Science,\\ Sapienza University of Rome}
 \city{Rome}
  \country{Italy}
}
\author{Emanuele Rodolà}
\email{rodola@di.uniroma1.it}
\affiliation{%
  \institution{Department of Computer Science,\\ Sapienza University of Rome}
 \city{Rome}
  \country{Italy}
}



\begin{abstract}
The multiplication of a matrix by its transpose, $\mA^T\mA$, appears as an intermediate operation in the solution of a wide set of problems.
In this paper, we propose a new cache-oblivious algorithm (\ATA) for computing this product, based upon the classical Strassen algorithm as a sub-routine. In particular, we decrease the computational cost 
to $\nicefrac{2}{3}$ the time required by Strassen's algorithm, amounting to $\frac{14}{3}n^{\log_2 7}$ floating point operations.
\ATA works for generic rectangular matrices, and exploits the peculiar symmetry of the resulting product matrix for saving memory. In addition, we provide an extensive implementation study of \ATA in a shared memory system, and extend its applicability to a distributed environment.
To support our findings, we compare our algorithm with state-of-the-art  solutions specialized in the computation of $\mA^T\mA$. 
Our experiments highlight good scalability with respect to both the matrix size and the number of involved processes,
as well as favorable performance for both the parallel paradigms and the sequential implementation, when compared with other methods in the literature.
\end{abstract}


\keywords{Matrix Multiplication;
Cache-oblivious algorithm;
Fast Linear Algebra;
MPI;
OpenMP;
C/C++.}


\maketitle


\section{Introduction}
Matrix multiplication is a fundamental operation in Linear Algebra and HPC, 
as it appears as an intermediate step in a wide set of problems.  
Many researchers have devoted their efforts to the algorithmic aspects of matrix multiplication, with the aim of improving  the computational cost of existing algorithms and to devise and implement new solutions for parallel architectures. 
Designing a distributed algorithm for matrix multiplication is a challenging task, due to the inherent dependence of the data scattered in the system's distributed memory, and due to the overhead due to the communication cost of assembling the resulting product matrix.

The product of a matrix by its transpose, $\mA^T\mA$ (as well as $\mA\mA^T$), is a particular matrix multiplication involved in several applications. For example, computing $\mA\mA^T$ is a straightforward, yet effective, method to check for orthogonality or to project vectors onto the space spanned by the columns of $\mA$. This product, in fact, is repeatedly computed in the Gram-Schmidt algorithm for vector basis orthogonalization, where $\mA$ is the progressively built projection matrix.
One way to solve the least squares problem of under and over determined linear systems $\mA\vx=\vb$, is to solve the associated system of normal equations, obtained by left-hand multiplying the original system by $\mA^T$, thus obtaining a square linear system $\mA^T\mA \vx = \mA^T\vb$.
Also, the Singular Value Decomposition (SVD) of a matrix $\mA$ can be computed by studying the
eigenproblem for $\mA^T\mA$ and $\mA\mA^T$. 
Furthermore, the product of a matrix by its transpose commonly arises in discrete exterior calculus and discrete differential geometry. One example is given by the discrete heat kernel $\mathbf{K}(t) = \mathbf{\Phi}\mathbf{E}(t)\mathbf{\Phi}^T$, with $\mathbf{E}(t) = \exp(-\mathbf{\Lambda}t)$ being a diagonal matrix, so that  $\mathbf{K}(t) = (\mathbf{\Phi}\mathbf{E}(t)^{\nicefrac{1}{2}})(\mathbf{\Phi}\mathbf{E}(t)^{\nicefrac{1}{2}})^T$ can be efficiently computed~\cite{zeng2012discrete}. Many other applications of the product $\mA^T\mA$ are described in~\cite{strang06}, together with its properties such as positive semi-definiteness.





In this work, we consider the multiplication between $\mA^T$ and $\mA$, where $\mA$ may have any size and shape. We rely on a recursive approach that, as described in~\cite{peise2017algorithm}, is virtually tuning free and avoids the significant tuning efforts that are required by iterative blocked algorithms to achieve near-optimal performance. Our contribution is threefold. 
\begin{itemize}[leftmargin=*]
    \item First, we introduce \ATA (Section~\ref{sec:ATA}), a cache-oblivious algorithm for computing  $\mA^T\mA$ that requires $\nicefrac{2}{3}n^{(\log_2 7)}+\nicefrac{1}{3}n^2$ multiplications. We exploit the self-similarity of the $\mA^T\mA$ product with its sub-problems and the Strassen's algorithm, that is recursively applied to possibly rectangular matrices, without introducing additional computational and space cost, deriving from dynamic peeling and padding, as in \cite{Huss-Super96,thottethodi1998tuning}. In contrast to~\cite{dumas2020fast}, our algorithm works on any algebraic field. 
    We prove that \ATA exhibits high efficiency for both memory and time, and show that it is efficiently implementable, as it does not hide large constant factors. 
    We also describe our implementation of Strassen's algorithm, and compare its performance with that of the Intel MKL BLAS \texttt{gemm} routine for matrix multiplication. 
    \item Second, we describe \ATAS, our multi-threaded implementation of \ATA for a shared memory system,  relying on OpenMP (Section~\ref{sec:ATAS}). A well-engineered scheduler that 
    assigns different tasks to each thread in such a way that computations 
    can be carried out in perfect parallelism by preventing memory collisions. Performance evaluation shows that our implementation outperforms the multi-threaded Intel MKL BLAS routines (e.g. \texttt{syrk} for symmetric rank-$K$ update) on large matrices, even on 
    Intel processors.
    \item Finally, we extend our approach 
    to distributed systems, leveraging the standard message-passing paradigm MPI. Our distributed algorithm \ATAP allows the distribution of the computational effort among a larger number of processes (Section~\ref{sec:ATAP}). This is particularly convenient on very large matrices. 
\end{itemize}






 
To validate the effectiveness of our algorithms, we study their performance by running a set of tests on dense matrices of variable size (Section~\ref{sec:ATAtest}). We analyse different metrics for evaluating the scalability of our parallel implementation, and compare our results with benchmark solutions for distributed systems. We run tests on a cluster of multi-core nodes endowed with $2\times 8$ core Intel Xeon E5-2630v3 processors, 2.4 Ghz, 4 GB RAM/core.



\section{Related work}
\label{sec:related}

Nowadays, matrix multiplication is still a hot topic in HPC and numerical algorithmics. 
In 1969, Strassen \cite{strassen1969gaussian} was the first to reduce the computational complexity of the standard matrix multiplication from $O(n^3)$ to $O(n^{\log_2 7})$. 
More recently, Coppersmith and Winograd \cite{coppersmith1987matrix} devised an algorithm for matrix multiplication running in $\sim O(n^{2.38})$ time. In the last decade, many have devoted their efforts to improve this limit (\cite{stothers2010complexity,williams2012multiplying,le2014powers}). 
These works make use of algebraic tensors that, despite the elegance of the resulting method, are still hardly used in practice as they come at the cost of very large hidden constants and frequent memory access. 

Several authors have designed hybrid algorithms, deploying Strassen’s multiplication in conjunction with conventional matrix multiplication, to overcome the overhead of Strassen's algorithm on small matrices, see, e.g., \cite{brent-TR70,brent-NM70,Higham-TOMS90,Huss-Super96,Benson-PPOPP15}.
Huss-Lederman {\it et al.} \cite{Huss-Super96} propose two techniques, known as dynamic peeling and static padding, in order to apply Strassen's algorithm to odd-sized matrices. Thottethodi {\it et al.} \cite{thottethodi1998tuning} propose two strategies to optimize memory efficiency in Strassen by minimizing padding and peeling operations. 
Many researchers have proposed a parallel implementation of Strassen's algorithm. 
In \cite{Luo-SAC95}, Luo and Drake explored Strassen-based parallel algorithms that use the communication patterns known for classical matrix multiplication. They considered using a classical 2D parallel algorithm and using Strassen locally and at the highest level. 
This approach is improved in \cite{Grayson-PPL95} by using a more efficient parallel matrix-multiplication algorithm running on a more communication-efficient machine. 
In \cite{d2007adaptive}, Strassen’s algorithm is extended to deal with rectangular and arbitrary-size matrices.
Their approach leverages on a suitable combination of Strassen’s with ATLAS and GotoBLAS. 
Other parallel approaches \cite{Desprez-04,Hunold-08,Song-PDCS06} have used more complex parallel schemes and communication patterns, 
and consider at most two steps of Strassen. 
In \cite{ballard2012communication}, a parallel algorithm based on Strassen's fast matrix multiplication, Communication - Avoiding Parallel Strassen (CAPS), is described.
The authors show that its complexity matches the communication lower bounds described in \cite{BDHS-SPAA11}. This work is extended in \cite{demmel2013communication} to handle rectangular matrices (CARMA). %
More recently, Kwasniewski {\it et al.} \cite{COSMA} proposed a near optimal algorithm for matrix multiplication that models the matrix multiplication dependencies by the red-blue pebble game \cite{10.1145/800076.802486} to derive an I/O optimal schedule, improving the performance of previous works. 

Both Strassen's algorithm and \ATA fall into the class of recursive blocked algorithms. The work in~\cite{elmroth2004recursive,kaagstrom2004management} proves the effectiveness of this kind of algorithms for dense Linear Algebra. The work in~\cite{eliahu2015frpa} introduces FRPA, an interface for implementing recursive problems in parallel that gets as an input the recursive problem, and handles parallelization and  auto-tuning automatically. 
Similarly to our approach, Charara \emph{et al.} \cite{charara2016redesigning} propose block recursive matrix multiplication and linear solver algorithms. They show how recursion enhances data reuse and concurrency in GPUs. Differently from the work presented in this paper, they specialize on triangular matrices. In~\cite{charara2019batched}, the authors also adapt this blocking strategy to handle batched operations on small matrix sizes (up to 256) to stress the register usage and maintain data locality. 
In~\cite{peise2017algorithm}, Elmar and Bientinesi introduce ReLAPACK, a collection of  recursive algorithms for dense Linear Algebra. While this work corroborates the recursive approach that we implement in our algorithms, it does not provide a routine specialized in the $\mA^T\mA$ product for general matrices. Instead, they propose a routine for the same multiplication only on triangular matrices. We highlight that the solutions proposed for the multiplication of a matrix by its transpose on triangular matrices (TRSYRK) is useful for many applications but cannot be applied on general matrices. 

Although much research has been devoted to optimizing the  implementation of parallel matrix multiplication, very few solutions have been proposed for the $\mA^T\mA$ multiplication.   
In \cite{dumas2020fast}, Dumas \emph{et al.} propose an algorithm for the product $\mA\mA^T$ whose computational complexity  is improved by a constant factor with respect to previously known reductions. This approach is applicable only to matrices lying in fields where skew-orthogonal matrices exist (e.g., $\mathbb{C}$ and finite fields of prime characteristics), which is not the case for $\mathbb{R}$ and $\mathbb{Q}$, that instead are important in many applications, such as the study of embedded systems, computational geometry and system simulations. 

Except for some sporadic attempts to implement a method for distributing in a balanced way the workload for matrix multiplication among processes with the MapReduce programming model \cite{qasem2017matrix,kadhum2017efficient}, the approach that we implement here for the distributed parallel model has barely been investigated. 

\section{\ATA}
\label{sec:ATA}
In this section, we describe our sequential recursive algorithm for the matrix multiplication $\mA^T\mA$, dubbed \textsc{AtA}, and we provide implementation details. We remark that our solution also works for the product $\mA\mA^T$. Yet, when row-major order is the default layout for array storage, 
the $\mA^T\mA$ multiplication is in practice harder to perform, as memory access is inherently column-wise, hence not cache friendly. Since \ATA includes calls to Strassen for generic matrix multiplications, we also outline a time and space efficient implementation for this algorithm.

\subsection{\ATA in detail}
Let $\mA\in \mathbb{R}^{m\times n}$ be a rectangular matrix. The idea behind \ATA is the following: at each recursive step, matrix $\mA$ is divided into four sub-matrices as follows: 
\begin{equation}
\label{eq:submatA}
\begin{array}{cc}
    \mA=
\begin{bmatrix}
       \mA_{1,1} & \mA_{1,2}  \\
    \mA_{2,1} & \mA_{2,2}
     \end{bmatrix} &  
     \begin{array}{cc}
         \mA_{1,1}=\mA_{0:m_1,0:n_1} \in \mathbb{R}^{m_1 \times n_1}  \\
          \mA_{1,2}=\mA_{0:m_1,n_1:n}\in \mathbb{R}^{m_1 \times n_2}\\
          \mA_{2,1}=\mA_{m_1:m,0:n_1}\in \mathbb{R}^{m_2 \times n_1}\\
          \mA_{2,1}=\mA_{m_1:m,n_1:n}\in \mathbb{R}^{m_2 \times n_2}
     \end{array}
\end{array}
\end{equation} 
being $m_1  :=\left \lfloor \frac{m}{2}\right \rfloor$, $m_2 :=\left \lceil \frac{m}{2}\right \rceil$,  $n_1 :=\left \lfloor \frac{n}{2}\right \rfloor$, $n_2 :=\left \lceil \frac{n}{2}\right \rceil$. We address to sub-matrices of a matrix $\mA$ as to indexed sub-blocks ($\mA_{i,j}$) or with line and column intervals ($\mA_{r_1:r_2,c_1:c_2}$). The product matrix $\mC = \mA^T \mA$ is also split into four sub-matrices, resulting in the following:
    

\noindent
\begin{equation}
\begin{aligned}
\label{eq: Cij}
 \mC_{1,1} = \mA_{1,1}^T\mA_{1,1} + \mA_{2,1}^T\mA_{2,1}\in \mathbb{R}^{n_1 \times n_1},\\
\mC_{1,2} = \mA_{1,1}^T\mA_{1,2} + \mA_{2,1}^T\mA_{2,2}\in \mathbb{R}^{n_1 \times n_2},\\
\mC_{2,1} = \mA_{1,2}^T\mA_{1,1} + \mA_{2,2}^T\mA_{2,1}\in \mathbb{R}^{n_2 \times n_1},\\
\mC_{2,2} = \mA_{1,2}^T\mA_{1,2} + \mA_{2,2}^T\mA_{2,2}\in \mathbb{R}^{n_2 \times n_2}.
\end{aligned}
\end{equation}
Both $\mC_{1,1}$ and $\mC_{2,2}$ consist of two addends that are, in turn, the left hand product of a matrix by its transpose. 
Hence, four recursive calls are employed to compute the  sub-products $\mA_{1,1}^T \mA_{1,1}$ and $\mA_{2,1}^T \mA_{2,1}$ to obtain $\mC_{1,1}$, and $\mA_{1,2}^T \mA_{1,2}$ and $\mA_{2,2}^T\mA_{2,2}$ to obtain $\mC_{2,2}$. 

Since for any matrix $\mA$ the product $\mA^T\mA$ is symmetric, at each recursive step only the lower triangular part of the product matrix is computed, $\text{low}(\mC_{i,i})$, $i = 1,2$.
As for component $\mC_{2,1}$, in order to compute its two terms in the sum, 
we implement the generalized Strassen's algorithm for non-square matrices. The sub-matrix $\mC_{1,2}$ is equal to $\mC_{2,1}^T$, and therefore must not be explicitly computed. 
In Algorithm \ref{alg: AtA} we provide the pseudo-code of \textsc{AtA}. The base case occurs when the number of entries of the sub-matrix fits in the cache. 
In that case, the multiplication is performed by the BLAS function for $\mA^T\mA$, \texttt{?syrk}, where the character \texttt{?} represents a generic data type in accordance with standard notation used in manuals,~\cite{intelMKL}. In Algorithm~\ref{alg: AtA}, we also sketch our implementation of Strassen: before the actual recursive Strassen algorithm is called (\textsc{Strassen}), in \textsc{FastStrassen} we conveniently prepare an environment for memory efficiency by pre-allocating the memory for Strassen's algorithm, as explained in Section~\ref{sec:ATAS(n)}. The reduced number of multiplications in Strassen's algorithm is achieved by computing more matrix additions. In our implementation of \textsc{Strassen}, matrix additions are performed by calling the BLAS routine \texttt{?axpy} (for the vector addition $\boldsymbol{y} = \alpha \vx + \boldsymbol{y}$). The base-case condition in \textsc{Strassen} is analogous to the one of \ATA. When the base-case condition holds, we call the BLAS routine \texttt{?gemm} for the generic $\mA^T \mathbf{B}$ multiplication. To handle odd-sized matrices, we do not implement well-known strategies such as peeling or padding, since these are known for introducing computational and memory overhead. Instead, we manage sums between matrices of discordant size by conveniently applying the BLAS routine \texttt{?axpy} for array sums, so that it simulates padding of an extra 0 column or row, by excluding 
the last row and/or column of a sub-matrix from the sum.

\ATA and \Strassen are designed to be efficient alternatives to the BLAS routines \texttt{?gemm} and \texttt{?syrk}. Thus, they perform the same operations, respectively $\mC = \alpha \mA^T\mB + \beta\mC$ and $\mC = \alpha \mA^T\mA + \beta\mC$. However, we avoid introducing the scaling factor $\beta$ from our algorithms for clarity of exposure, since $\mC$ can be simply scaled before applying the algorithms.

\begin{algorithm}[h]
\SetAlgoLined
\KwIn{ $\mA\in \mathbb{R}^{m\times n}, \mC \in \mathbb{R}^{n \times n}, \alpha \in \mathbb{R}$}
\KwOut{ Lower triangular part of $\mC = \alpha \mA^T \cdot \mA + \mC$}
\SetKwFunction{proc}{\ATA}
\SetKwProg{myproc}{Procedure}{}{}
\myproc{\proc{$\mA$, $\mC$, $\alpha$}}{
 \label{line:ATA1}
\eIf {$m\times n \le$ cache size}
{$\mC \gets \mC +$ \texttt{blas\_?syrk}($\mA$, $\alpha$)\; 
 return\;
}
{
 Initialize pointers to $\mA_{i,j}$ and $\mC_{i,j}$, $i,j = 1,2$\;
 \ATA($\mA_{1,1}, \mC_{1,1}$, $\alpha$)\; 
 \ATA($\mA_{2,1}, \mC_{1,1}$, $\alpha$)\; 
 \ATA($\mA_{1,2}, \mC_{2,2}$, $\alpha$)\; 
 \ATA($\mA_{2,2}, \mC_{2,2}$, $\alpha$)\; \label{line:ATAlast}
 \Strassen($\mA_{1,2}$, $\mA_{1,1}$, $\mC_{2,1}$, $\alpha$)\; \label{line:strassen1}
 \Strassen($\mA_{2,2}$, $\mA_{2,1}$, $\mC_{2,1}$, $\alpha$)\; \label{line:strassen2}
}
}
\hrulefill\\
\KwIn{$\mA\in \mathbb{R}^{m\times n}, \mB\in\mathbb{R}^{m\times k}, \mC \in \mathbb{R}^{n \times k}, \alpha \in \mathbb{R}$}
\KwOut{$\mC = \alpha \mA^T \cdot \mB + \mC$}
\SetKwFunction{proc}{\textsc{FastStrassen}}
\SetKwProg{myproc}{Procedure}{}{}
\myproc{\proc{$\mA$, $\mB$, $\mC$, $\alpha$}}{
Allocate $\mM = \mathbf{0}^{n\times k / 2}$\;
Allocate $\mathbf{P} = \mathbf{0}^{m\times n / 2}$\;
Allocate $\mathbf{Q} = \mathbf{0}^{m\times k / 2}$\;
\textsc{Strassen}($\mM$, $\mathbf{P}$, $\mathbf{Q}$, $\mA$, $\mB$, $\mC$, $\alpha$)\;

 }
\caption{\ATA - Serial}
\label{alg: AtA}
\end{algorithm}

\subsection{Computational Complexity}
\label{sec:ATAT(N)}
The idea behind Strassen's algorithm is to perform a $2\times 2$ matrix multiplication using 7 multiplications instead of 8, as required by naive matrix multiplication \cite{strassen1969gaussian}. Nevertheless, Strassen's algorithm involves 18 sums between sub-matrices, thus leading to a computational complexity $T_S(n)\approx
    7 n^{\log_{2} 7}$.

In Algorithm \ref{alg: AtA}, there are four recursive calls to \textsc{AtA} on basically halved dimensions, two calls to \Strassen and 3 sums. Thus, we can derive the recurrence function for \ATA runtime depending on the input size $n$ as follows: 
\begin{equation}
\label{eq:ATACC}
    T(n)
    =
    4 T\left( \frac{n}{2} \right)
    +
    2 T_S\left( \frac{n}{2} \right)
    +
    3 \left( \frac{n}{2} \right)^2
    \approx
    \frac{2}{3} T_S(n) .
\end{equation}
 The overall computational complexity of \ATA reduces the one of the general matrix multiplication $\mA^T\mA$, amounting to $n^2(n+1)$, and of Strassen's algorithm naively applied for computing $\mA^T\mA$, that would require the same number of products as for the general matrix multiplication, and only 16 sums instead of the 18 matrix additions in the original Strassen's formulation.

\subsection{Space complexity}
\label{sec:ATAS(n)}


In \ATA, at each recursive step, pointers to the current portions of $\mA$ and $\mC$ are initialized so that, when the condition for the base-case occurs, the matrix multiplications are carried out on the correct sub-matrices of $\mA$, and stored in the corresponding locations in $\mC$. 

Strassen's algorithm for general matrix multiplication is called twice. One drawback of the naive Strassen implementation is the great amount of memory allocated at each recursive step to store the results of the intermediate matrix additions required by the algorithm. In order to avoid frequent memory allocations and releases, we call recursive Strassen (\textsc{Strassen}) on pre-allocated matrices, $\mathbf{M}$, $\mathbf{P}$ and $\mathbf{Q}$ (\Strassen). The size of such matrices is sufficiently large to store all intermediate matrix operation results throughout the recursive calls.
In fact, given an $n \times n$ matrix, at each recursive step we halve both the dimensions, rounding up the result to the nearest integer when matrices have odd sizes. By doing so, the amount of memory used by the algorithm when the base case is reached if
\begin{equation}
    \sum_{i = 1}^{\log_2 n} \frac{(n + \log_2 n)^2}{4^i}
    =
    (n + \log_2n)^2 \left (\frac{1}{3} - \frac{4}{3n^2}\right )
    \leq
    \frac{n^2}{2}
\end{equation}
which, multiplied by the three supporting matrices $\mM$, $\mathbf{P}$ and $\mathbf{Q}$, results in a total of $\frac{3}{2}n^2$.
Although the overall space complexity of Strassen does not change, we are able to save time for memory allocation at each recursive step. 
Consequently, the space complexity of \ATA is $S(n) = \frac{3}{2}n^2$.

In Section~\ref{sec:ATAtest}, we show that Strassen's algorithm benefits from the described strategy for memory allocation.


\subsection{Cache Complexity}
In this section, we show the cache complexity of \ATA. We assume the ideal cache model and we denote with $M$ the cache size, and with $b$ the size of the cache line. 
\begin{proposition}
The cache complexity of \ATA, $C_{\ATA}(n; M, b)$, is the same as the cache complexity of Strassen, $C_S(n; M, b) = \Theta(1 + \nicefrac{n^2}{b} + \nicefrac{n^{\log_2(7)}}{b\sqrt{M}})$, \cite{frigo1999cache}. 
\end{proposition}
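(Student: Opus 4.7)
The plan is to derive a cache-complexity recurrence for \ATA that parallels the work recurrence of Equation~\eqref{eq:ATACC}, then solve it by substituting Frigo et al.'s Strassen bound $C_S(n; M, b) = \Theta(1 + n^2/b + n^{\log_2 7}/(b\sqrt M))$. This is the standard cache-oblivious analysis pattern: identify a ``small-enough'' threshold under which the working set fits in cache, bound the misses on a single recursive level, and sum the contributions over the recursion tree.

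First I would handle the base case. When $n^2 \le \alpha M$ for a sufficiently small constant $\alpha > 0$ (chosen so that $\mA$, the relevant triangle of $\mC$, and the scratch buffers $\mM,\mathbf P,\mathbf Q$ from \Strassen all co-reside in cache under the standard tall-cache assumption $M = \Omega(b^2)$), the \texttt{?syrk} branch fires and only cold misses are incurred, giving $O(1 + n^2/b)$. For $n^2 > \alpha M$, the four recursive \ATA calls on halved dimensions (lines \ref{line:ATA1}--\ref{line:ATAlast}), the two \Strassen calls (lines \ref{line:strassen1}--\ref{line:strassen2}), and the bookkeeping sums yield
$$
C_{\ATA}(n; M, b) \;\le\; 4\, C_{\ATA}(n/2; M, b) \;+\; 2\, C_S(n/2; M, b) \;+\; O(n^2/b).
$$

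To solve this, I would unroll the recurrence down to the base-case level $k = \log_2(n/\sqrt{\alpha M})$ and analyze the two driving terms separately. Substituting the Strassen bound, level $i$ contributes $4^i \cdot O\bigl((n/2^i)^{\log_2 7}/(b\sqrt M)\bigr) = O\bigl(n^{\log_2 7}/(b\sqrt M)\bigr)\cdot (4/7)^i$; since $\log_2 7 > 2$ the geometric series converges and the root level dominates, giving $O(n^{\log_2 7}/(b\sqrt M))$. The $O(n^2/b)$ overhead per level sums to $O(n^2/b)$ (up to a $\log n$ factor that is absorbed by the leading term), and the $4^k = O(n^2/M)$ base-case leaves together contribute $O(n^2/b)$. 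Combining these gives the upper bound $C_{\ATA}(n; M, b) = O(1 + n^2/b + n^{\log_2 7}/(b\sqrt M))$, matching $C_S$.

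The matching lower bound is immediate from the fact that \ATA invokes \Strassen on submatrices of total size $\Theta(n)$ to assemble $\mC_{2,1}$, and Ballard et al.'s $\Omega(n^{\log_2 7}/(b\sqrt M))$ communication lower bound for Strassen-type algorithms transfers verbatim to these sub-calls; cold misses on $\mA$ and $\mC$ provide the residual $\Omega(1 + n^2/b)$ term. The main technical obstacle, as usual in cache-oblivious analysis, is choosing the base-case constant $\alpha$ so that all operand and scratch arrays of \Strassen co-reside under the tall-cache assumption, and verifying that the recursion on odd-sized submatrices (handled in \ATA without padding) introduces only constant-factor slack in the recurrence, so that the asymptotic solution is unaffected.
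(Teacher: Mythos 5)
Your proof is correct in its conclusion but takes a genuinely different route from the paper. You solve the cache recurrence from scratch: you set the base case at $n^2 \le \alpha M$, unroll $C_{\ATA}(n) \le 4C_{\ATA}(n/2) + 2C_S(n/2) + O(n^2/b)$ level by level, substitute the closed form of $C_S$, and sum the resulting geometric series; for the lower bound you invoke the Ballard et al.\ communication lower bound on the \Strassen{} sub-calls. The paper instead argues by induction \emph{against Strassen's own recurrence}: it observes $C_{\ATA}(n) = 4C_{\ATA}(n/2) + 2C_S(n/2) \le 6\,C_S(n/2) \le 7\,C_S(n/2) \le C_S(n)$, and obtains the matching lower bound elementarily from the sandwich $C_S(n/2) \le C_{\ATA}(n) \le C_S(n)$ together with $C_S(n/2) = \Theta(C_S(n))$, with no need for a separate lower-bound theorem. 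Your approach buys explicitness (the base-case threshold, the tall-cache assumption, and the role of the scratch buffers are all visible, which the paper leaves implicit), but it has one soft spot the paper's induction avoids: the flat per-level contribution $4^i \cdot O\bigl((n/2^i)^2/b\bigr) = O(n^2/b)$ summed over $k = \log_2(n/\sqrt{\alpha M})$ levels gives $O(k\,n^2/b)$, and your claim that this $\log$ factor is ``absorbed by the leading term'' fails in regimes where $n$ is only polynomially larger than $\sqrt{M}$ (e.g.\ $n = M^{0.6}$, where $k\,n^2/b$ exceeds $n^{\log_2 7}/(b\sqrt{M})$). The paper's proof sidesteps this because the slack in passing from $6\,C_S(n/2)$ to $7\,C_S(n/2)$ is a full extra copy of $C_S(n/2) = \Omega(n^2/b)$ at every level, which absorbs the additive cost inside Strassen's recurrence rather than outside it; to make your unrolling airtight you would need to fold the $O(n^2/b)$ term into one of the $C_S(n/2)$ summands in the same way before expanding.
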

\begin{proof}

\begin{algorithm}[h]
\SetAlgoLined
\KwIn{$\mA\in \mathbb{R}^{m\times n}, \mB\in\mathbb{R}^{m\times k}, \mC = \mathbf{0}^{n \times k}$}
\KwOut{$\mC = \mA^T \cdot \mB$}
\SetKwFunction{proc}{\textsc{RecursiveGEMM}}
\SetKwProg{myproc}{Procedure}{}{}
\myproc{\proc{$\mA$, $\mB$, $\mC$}}{
 \label{line:ATA1}
\If {$m\times n + m\times k\leq$ cache size}
{
$\mC +=$ \texttt{blas\_?gemm}($\mA^T$, $\mB$)\; 
 return\;
}
\For{$i = 1,2$}
{
    \For{$j = 1,2$}
    {
        \For{$k = 1,2$}
        {
            \textsc{RecursiveGEMM}($\mA_{k,i}$, $\mB_{k, j}$, $\mC_{i,j}$)\;
        }
    }
}

}
\caption{\textsc{RecursiveGEMM}}
\label{alg: AtAmod}
\end{algorithm}

We prove the thesis by induction. First, we observe that $C_{\ATA}(2; M, b) = 6C_S(1; M, b) \leq 7C_S(1; M, b) = C_S(2; M, b)$. Assuming as inductive hypothesis that $C_{\ATA}(\nicefrac{n}{2}; M, b) \leq C_S(\nicefrac{n}{2}; M, b)$, it holds that: \begin{align*}
    C_{\ATA}(n; M, b) &= 4C_{\ATA}(\nicefrac{n}{2}; M, b) + 2C_S(\nicefrac{n}{2}; M, b)\\
    &\leq 6 C_{S} (\nicefrac{n}{2}; M, b)\leq 7C_S(\nicefrac{n}{2}; M, b) = C_S(n; M, b).
\end{align*}
Furthermore, notice that: $C_S(\nicefrac{n}{2};  M, b) \leq C_{\ATA}(n; M, b) \leq C_S(n; M, b)$. Hence, the thesis holds. 
\end{proof}

\section{Parallel \ATA}
Our algorithm for the $\mA^T\mA$ product, \ATA, can be conveniently parallelized to work on both shared and distributed-memory systems. We will refer to our shared and distributed-memory algorithms for $\mA^T\mA$ as \ATAS and \ATAP, respectively.  Our parallel implementations of \ATA take advantage of the recursive nature of \ATA to distribute tasks (and possibly data) to different processes in an efficient way. To do so, an initial phase that implements a scheduler covering the recursion tree of \ATA is integrated in both parallel algorithms. In this way, we assign a task to each different parallel process, as we explain in Section~\ref{sec:simulexec}. After this preliminary phase, each process knows which sub-problem it has to solve.

\subsection{Preliminary phase: task assignment}
\label{sec:simulexec}
Usually, recursive algorithms are parallelized with a fork-join paradigm, according to their natural behaviour: at each recursive call, a new thread is created to accomplish that call. However, repeatedly creating and killing threads introduces a significant overhead, especially when it happens as a nested procedure. A parallelized \texttt{for} loop approach can usually improve this thread start-up overhead. For this reason, rather than addressing the problem by distributing recursive calls between newly created threads, we simulate the behaviour of a fork-join algorithm to
determine, for each thread, on which sub-matrices it must work. This is particularly useful to generalize our approach to both shared memory and distributed settings.

\begin{figure*}
    \centering
    \includegraphics[height=.38\paperwidth]{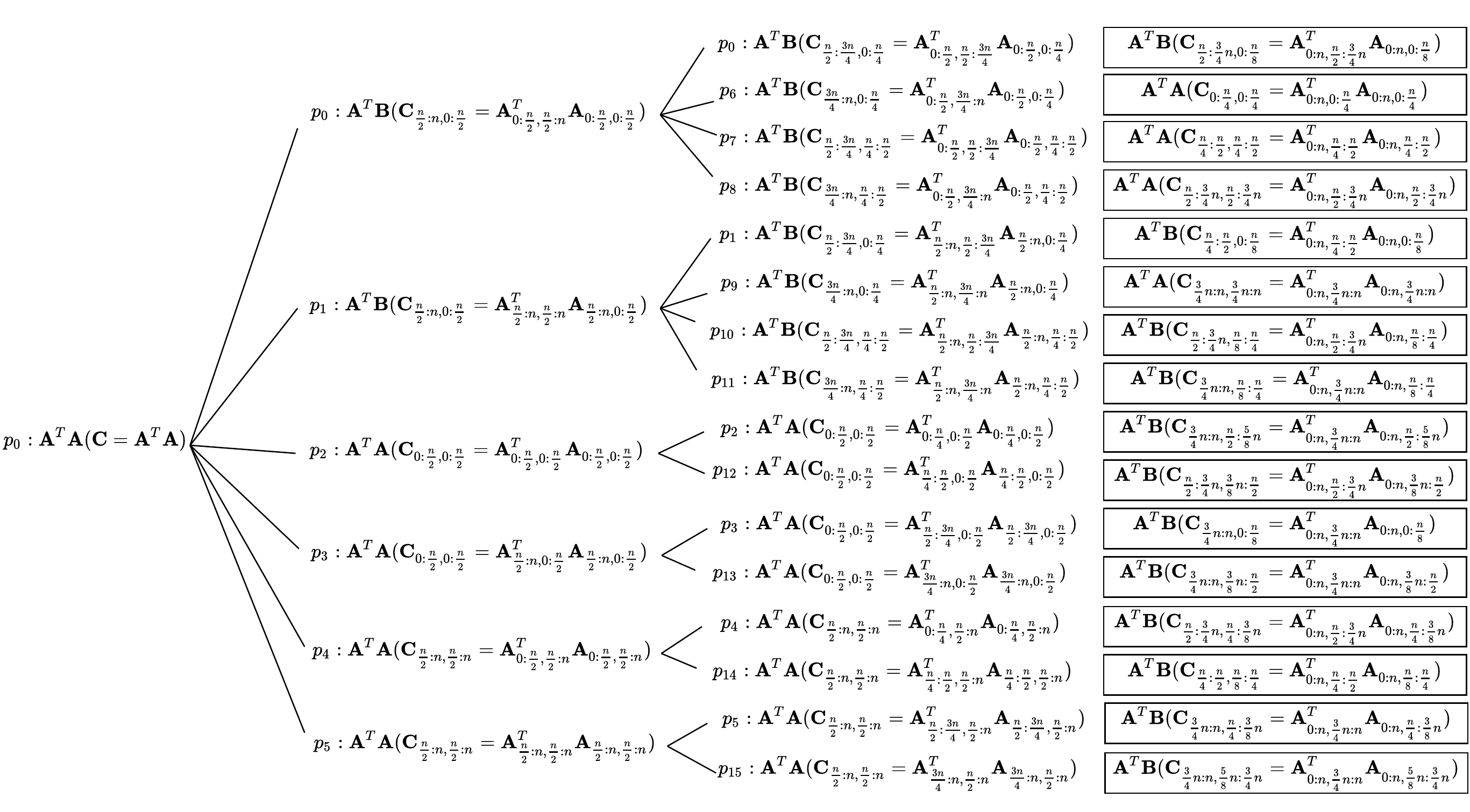}
    \caption{A tree of 16 processes distributing 
    $A \in \mathbb{R}^{n\times n}$. Boxed labels on the right-hand side are the leaf nodes of the tree generated by \ATAS, corresponding to computation tasks assigned to corresponding processes in the left-hand side leaf labels. }
    \label{fig:exectree}
\end{figure*}
\subsubsection{Building the task tree}
\label{sec:taskassignment}
To conveniently distribute tasks among $P$ parallel processes collaborating to compute $\mA^T\mA$, in the first phase of our algorithms, each process builds the recursion tree of a modified version of \ATA, that we shall call \ATAnaive, and explores a part of it with a breadth-first search (BFS), see Figure~\ref{fig:exectree}. \ATAnaive considers classic recursive general matrix multiplication instead of Strassen, and can be easily implemented by modifying 
Algorithm~\ref{alg: AtA} to call \textsc{RecursiveGEMM} instead of \textsc{Strassen}. \textsc{RecursiveGEMM}, summarized in Algorithm~\ref{alg: AtAmod}, is a recursive algorithm for the naive general matrix multiplication. 
The reasons behind this choice will be explained in Section~\ref{sec:naiveVSstrassen}. We define the \emph{task tree}, denoted with $\TaskTree$, to be the sub-tree of the recursion tree of \ATA, obtained by spanning the latter with a BFS, that is interrupted as soon as $\TaskTree$ counts $P$ leaves, labeled from $0$ to $P - 1$.
Both \ATAS and \ATAP implement the task tree, but with some 
differences concerning data and task division. 
In \ATAP, each $p$-th leaf corresponds to the task that process $p$ has to fulfil, and contains directives on both the computational and communication activity that is due to the corresponding process. Specifically, a leaf task $t$ provides the following information:
\begin{enumerate}
    \item $t.computationType$: Which type of computation process $p$ has to carry out. It can be either a $\mA^T\mA$ or a $\mA^T\mB$ multiplication;
    \item $t.\mathbf{X}.offset$ and $t.\mathbf{X}.q$, with $\mathbf{X}\in\{\mA, \mB, \mC\}$, $q\in\{m,n\}$: The row and column offsets as well as the size of the sub-matrices of $\mA$ and $\mC$ process $p$ has to work on;
    \item $t.parent$: The parent process that sends sub-matrices of $\mA$ to its children (during the distribution phase), and to which process $p$ has to send the result of the task that was assigned to it or, if $p$ is the parent, the information on its children' tasks (during result retrieval). 
\end{enumerate}
Inner nodes of $\TaskTree$ instead, represent tasks concerning data distribution and retrieval, possibly involving sums of sub-matrices of $\mC=\mA^T\mA$, and consequent communication (point 3 of the previous list), and are executed by a subset of processes. In contrast, in \ATAS only leaf nodes of $\TaskTree$ correspond to a task, whereas inner nodes are ignored, as no communication is involved. For the same reason, leaf tasks only include information about what kind of computation the corresponding threads have to carry out and on the sub-matrices they have to work on (points 1 and 2 of the previous list).
\subsubsection{Load Balancing}
The task tree of \ATAP is created so that, at each level, given $P$ available processes, $\alpha\cdot P$ processes compute a general $\mA^T\mB$ matrix multiplication; for the remaining $(1-\alpha)\cdot P$ processes, a task for a $\mA^T\mA$ multiplication is assigned to them. Here, $\alpha\in (0,1)$ is a parameter for balancing the workload among distributed processes, as the computational complexity of a $\mA^T\mA$ product is lower than the one of $\mA^T\mB$. 
The task tree $\TaskTree$ is built by calling \textsc{RecursiveGEMM} (whose computational complexity is roughly twice the one of \ATA, $T(n)$). Therefore the number of multiplications carried out in $\TaskTree$ to perform $\mA^T\mB$ is twice the one needed to compute $\mA^T\mA$. The load balancing parameter must be such that $4\cdot \nicefrac{T(n)}{(1-\alpha)P}=2\cdot \nicefrac{2T(n)}{\alpha P}$. In accordance, we set $\alpha = \nicefrac{1}{2}$. 
 This task division is repeated recursively at each level, by progressively decreasing the number of available processes, $P$. 
The number of recursive parallel steps depends on $P$ and $\alpha$. In particular, for $\alpha = 0.5$, the number of parallel levels in the task tree, $\ell$ is given by the following expression:
\begin{equation}
\label{eq:numLevels}
\begin{aligned}
&\ell(P =1) = 0, \quad \ell(2\leq P \leq 6) =1\\
&\ell(P>6) =  1 + k + \text{sign} \left ({\frac{P}{4}}\mod {8^{\max\{k; 1\}}}\right ),
\end{aligned}
\end{equation}
where $k = \max\left \{k \in \mathbb{N}: \frac{\nicefrac{P}{4}}{8^k} \ge 1\right \}$ and $\mathrm{sign}(x)$ is the sign function, returning $0$ for $x = 0$ and $1$ for $x > 0$.
Indeed, when \ATAP is called on $P$ processes, $\nicefrac{P}{2}$ of them are going to compute $\mC_{2,1}$; out of them, $\nicefrac{P}{4}$ processes compute $\mA_{1,2}^T\mA_{1,1}$, whereas the remaining $\nicefrac{P}{4}$ are in charge for $\mA_{2,2}^T\mA_{2,1}$ (see Equation~\ref{eq: Cij}). These tasks are in turn distributed among 8 processes each, recursively (corresponding to the eight recursive calls of \textsc{RecursiveGEMM}). This splitting is repeated as long as it possible (i.e., until $\nicefrac{P}{4}/8^k\ge 1$). If by doing so, all $\nicefrac{P}{4}$ processes are used (i.e., $\nicefrac{P}{4}$ is a multiple of $8^k$, for some $k$), all processes work on equally sized matrices. Otherwise, some processes will further split their tasks to smaller matrices, resulting in an additional parallel level. We say that the last parallel level is \emph{complete} when all leaves corresponding to $\mA^T\mA$ tasks are grouped in bunches of 6 siblings, and when all leaves corresponding to $\mA^T\mB$ tasks are grouped in bunches of 8 siblings.

The task tree for \ATAS is quite different. In order to avoid concurrent overlapping writes, input matrices are tiled in horizontal and vertical blocks, as depicted in Figure~\ref{fig:matmul-verthor-tiling}. 
 This way, we ensure that each thread computes a different $\mC_{i, j}$. With this new scheme, we make three recursive calls to \ATA (instead of 6) and four recursive calls to \Strassen (instead of 8). 
Therefore, the number of parallel levels in \ATAS, given $P$ threads, is the following:
\begin{equation}
    \begin{aligned}
        &\ell(P=1) = 0,\quad \ell(P=2,3) = 1,\\
        &\ell(P>3) = 1 + k + \mathrm{sign}\left( {\frac{P}{2}}\mod {4^{\max \{k; 1\}}} \right),
    \end{aligned}
\end{equation}

with $k = \max\left \{k \in \mathbb{N}: \frac{\nicefrac{P}{2}}{4^k} \ge 1\right \}$.
In Figure~\ref{fig:exectree}, we show an example of the task tree with 16 processes for \ATAP, and the leaf nodes of the task tree for \ATAS (boxed). 

\subsubsection{Naive matrix multiplication over Strassen}
\label{sec:naiveVSstrassen}
In our parallel algorithms, we do not rely on Strassen for general $\mA^T\mB$ matrix multiplication when building the recursion tree, that instead is created by simulating \ATAnaive. This is done with the goal of optimizing the resources of distributed architectures, as the naive general matrix-multiplication algorithm does not allocate the additional memory required by Strassen, resulting in a faster memory management. Furthermore, Strassen's algorithm would possibly cause a hardly manageable workload unbalance between processes implementing an $\mA^T\mA$ multiplication, and those that would be in charge of computing the intermediate matrix sums appearing in Strassen's algorithm. However, Strassen's algorithm can still be used at leaf-level computation.



\subsection{Shared-memory \ATA}
\label{sec:ATAS}
\ATA can be implemented with a shared-memory parallel paradigm on multi-core machines.
We rely on OpenMP to efficiently distribute the workload between threads. Each thread simulates the recursion of \ATAnaive as described in Section~\ref{sec:simulexec}. The workload is distributed so that each thread writes in a different memory location, hence there is no need of handling data collisions of any kind. Instead, the problem is divided in a fashion that makes it embarrassingly parallel. We call \ATAS our multi-threaded algorithm for $\mA^T\mA$.

\subsubsection{\ATAS in detail}
Let us denote with $P$ the number of available threads. 
Our algorithm for multi-threaded machines, \ATAS, can be divided into two phases. During the first phase, one task is assigned to each thread by simulating the recursion of \ATAnaive, as described in Section~\ref{sec:simulexec}.
In order to prevent memory collisions and to achieve embarrassing parallelism, tasks are organized so that each thread writes on a different and disjoint memory location. This is done by dividing the resulting matrix $\mC$ into four blocks, as shown in Equation~\ref{eq: Cij}, whereas $\mA$ is tiled vertically or horizontally, instead of in $2 \times 2$ blocks
(see Figure~\ref{fig:matmul-verthor-tiling}). 
This procedure avoids concurrent writing management, it guarantees data and thread reuse and relies on the equality:
\begin{equation}
\label{eq:tiledmult}
    \mC_{i, j}
    =
    \mA_{i, 1}\mB_{1, j} + \mA_{i, 2}\mB_{2, j}
    =
    \mA_{i, *}\mB_{*, j},
\end{equation}
for $i,j=1,2$. Such instruction and data assignment allows for a faster execution, since threads never need to synchronize. 

During the second phase of \ATAS, each thread retrieves its task from the tree $\TaskTree$, specifying which routine (either \ATA or \Strassen) the corresponding thread must call, and on which sub-matrices of $\mA$ and $\mC$ it must operate.  On multicore systems, this means that data reuse in both L1 and L2 cache is optimized, since each thread operates on the same data throughout its entire lifespan. 
Since the tasks correspond to disjoint sub-problems, 
at the end of the computation each thread only needs to synchronize with the others, then the algorithm stops. In Algorithm~\ref{alg: AtAS} we provide the pseudo-code of \ATAS.

\begin{algorithm}[h]
\SetAlgoLined
\KwIn{ $\mA\in \mathbb{R}^{m\times n}$ }
\KwOut{ Lower triangular part of $\mC = \mA^T \cdot \mA$}
\SetKwFunction{proc}{\textsc{\ATAS}}
\SetKwProg{myproc}{Procedure}{}{}
\myproc{\proc{$\mA$}}{
Generate tree $\TaskTree$\;
\textbf{par}\For{each leaf-node $v$ of $\TaskTree$}
{
    Get task $t$ from node $v$\;
    \uIf{$t.computationType = \mA^T\mA$}
        {\ATA ($\mA_{t.\mA.offset}$, $\mC_{t.\mC.offset}$, 1)\;} 
    \uElseIf{$t.computationType = \mA^T\mB$}{ 
        \Strassen~($\mA_{t.\mA.offset}$,~$\mA_{t.\mB.offset}$,~$\mC_{t.\mC.offset}$,~$1$)\;
        \vspace{-0.4cm}
    }
}   
}
\caption{\ATAS - Shared}
\label{alg: AtAS}
\end{algorithm}

\subsubsection{Computational Complexity of \ATAS}
\label{sec:ATAStime}
We study the time complexity $T(n, P)$ of \ATAS to perform the multiplication $\mA^T\mA$ on an $n \times n$ matrix $\mA$ and distributing the workload between $P$ processes.

At first, the algorithm needs to generate the task tree and each process has to retrieve its task. These procedures have the same complexity as a BFS visit on a tree with $P$ leaves, hence $O(P)$.

The time complexity of the second step corresponds to the 
one of the most expensive leaf task, which appears at the end of a path of \textsc{RecursiveGEMM} calls.
At level $l$, the size of the product matrix $\mC$ is reduced to a block of size $\nicefrac{n}{2^l}\times\nicefrac{n}{2^l}$, resulting from a multiplication between  $\nicefrac{n}{2^l} \times n$ and $n\times\nicefrac{n}{2^l}$ matrices. Thus, the total complexity is reduced by $4^{\ell(P)}$, being $\ell(P)$ the number of levels in the task tree. Hence the total complexity of the algorithm is:
\begin{equation}
    T(n, P)
    =
    O(P)
    +
    O\left(
    \frac{1}{4^{\ell(P)}}
    n^{\log_2 7}
    \right).
\end{equation}
Notice that $\ell(P)$ is a discrete, non-injective function. Hence, especially with few processes, the
speed-up behaves like a step function. Despite this behaviour, $\ell(P) \approx \log_4 P$, meaning with large numbers of processes we achieve a theoretical linear speed-up.

\begin{figure}[h!]
    \centering
    \includegraphics[width=.9\columnwidth]{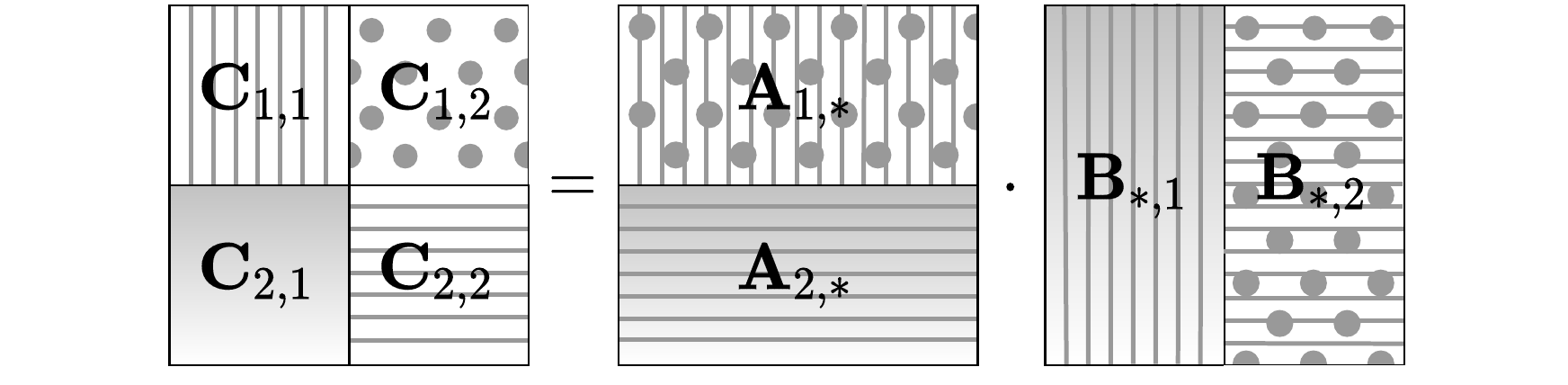}
    \caption{Multiplication with vertical/horizontal tiling.}
    \label{fig:matmul-verthor-tiling}
\end{figure}

\subsection{Distributed-memory \ATA}
\label{sec:ATAP}
Modern computers are equipped with an ever-increasing number of cores inside CPU chips. However, when it comes to massive volumes of data, computationally intensive tasks such as matrix multiplication are simply prohibitive, even for the most recent 16- or 32-cores chipsets, and even with hyper-threading capabilities. Distributed parallelism plays a crucial role in this setting, as it allows to distribute the workload between multiple machines. In such an environment, providing fast distributed algorithms for operations in Linear Algebra, including $\mA^T\mA$ multiplication, is a key task to limit bottlenecks.


In this section, we describe a distributed algorithm for $\mA^T\mA$, that works for any matrix size and with arbitrarily many processes and cores. We shall refer to this algorithm as \ATAP. \ATAP follows a distribute-compute-retrieve paradigm, as initially the input matrix $\mA$ is stored on the root process only, and distributed to other processes according to their tasks. Finally, the resulting matrix $\mC=\mA^T\mA$ is retrieved back by the root process. We implement a parallel communication scheme to limit data transfer overhead.

\begin{algorithm}[h]
\SetAlgoLined
\KwIn{$\mA\in \mathbb{R}^{m\times n}$}
\KwOut{Lower triangular part of $\mC = \mA^T \cdot \mA$}
\SetKwFunction{proc}{\textsc{\ATAP}}
\SetKwProg{myproc}{Procedure}{}{}
\myproc{\proc{$\mA$}}{
Generate tree $\TaskTree$\;
\For{each $v$ of $\TaskTree$ in the path from my leaf to the root}
{   
    Get my task $t$ from node $v$\;
    \If{$v$ is a leaf}
    {
        \uIf{$t.computationType = \mA^T\mA$}
        {
        $\mC_{t.\mC.offset}$ = $\mA^T\mA$($\mA_{t.\mA.offset}$)\; \label{line:baseCaseATA} 
        }
        \uElseIf{$t.computationType = \mA^T\mB$}
        {$\mC_{t.\mC.offset}$ = $\mA^T\mB$($\mA_{t.\mA.offset}$, $\mA_{t.\mB.offset}$)\;\label{line:baseCaseATB}
        }
    }
    \eIf{$t.parent \neq$ my ID}
    {
    Send $\mC_{t.\mC.offset}$ to $t.parent$\;\label{line:parentsend}
    }
    {
    Receive $\mC_{children.t.\mC.offset}$ from my children\;
     Sum over the sub-matrices and store result in $\mC$\;
    }
}   
}
\caption{\ATAP - Distributed}
\label{alg: AtAP}
\end{algorithm}

\subsubsection{\ATAP in detail}
\label{sec:ATAPdet}

Let $P$ be the number of distributed processes. In \ATAP, each process $p$ first builds the task tree $\TaskTree$ as described in Section~\ref{sec:simulexec}. To understand in detail how $\TaskTree$ is used in \ATAP, we shall refer to the example of Figure~\ref{fig:exectree}. As we said, each node represents a task, but only tasks contained in leaf nodes correspond to an actual matrix multiplication. Inner nodes instead represent tasks assigned only to the parents of the nodes branching out of them, and they are necessary to retrieve and combine the portions of the result matrix scattered among different processes, and eventually to send them, level by level, up to the root process, $p_0$. In the example of Figure~\ref{fig:exectree}, $\TaskTree$ is the task tree for $P=16$ processes on a square matrix. Leaf nodes are generated so that processes $p_0$, $p_1$ and $p_6\ldots,p_{11}$ share the workload to compute $\mC_{2,1}$. The remaining half of the processes is devoted to compute $\mC_{1,1}$ and $\mC_{2,2}$. If the number of distributed processes is not enough to make a complete level, as in this example, instead of calling multiple tasks on different tiles of the matrices, processes perform either an $\mA^T\mA$ or a $\mA^T\mB$ operation on vertically and horizontally tiled sub-matrices at the leaf-level. For instance, observe the first batch of sibling-leaves in Figure~\ref{fig:exectree}. To compute $\mathbf{C}_{\nicefrac{n}{2}:n, 0:\nicefrac{n}{2}}= \mathbf{A}_{0:\nicefrac{n}{2}, \nicefrac{n}{2}:n}^T \mathbf{A}_{0:\nicefrac{n}{2}, 0:\nicefrac{n}{2}}$, \ATAnaive would perform 8 recursive calls to $\mA^T\mB$; in \ATAP, each of these calls is served by one distributed process, if available. When this is not the case, as in the example that we are considering, processes $p_0, p_6,p_7,p_8$ divide $\mathbf{A}_{0:\nicefrac{n}{2}, \nicefrac{n}{2}:n}$ and $\mathbf{A}_{0:\nicefrac{n}{2}, 0:\nicefrac{n}{2}}$ in  vertical tiles so as to compute the related portions of $\mC$ as depicted in Figure~\ref{fig:matmul-verthor-tiling}. When the computation is over, partial results are collected by the parents of each group of siblings (processes $p_i$, $i=0,\ldots,5$). 
This operation is iterated by traversing the tree up to its root, $p_0$, and allows for a convenient parallel communication reducing data transfer overhead. In order to optimize the communication and to reduce the exchanged data volume, we encode the sub-matrices resulting from $\mA^T\mA$ operations as packed lower triangular matrices. Nevertheless, the entire operation, once it returns to the root process, still produces a standard square matrix. 
In Algorithm~\ref{alg: AtAP}, we provide the pseudocode of \ATAP. In line~\ref{line:parentsend}, if the process has to fulfill a $\mA^T\mB$ task, it sends to its parent the entire sub-matrix $\mC_{t.\mC.offset}$; otherwise, it only sends $\text{low}(\mC_{t.\mC.offset})$. In lines~\ref{line:baseCaseATA} and \ref{line:baseCaseATB}, $\mA^T\mA$ and $\mA^T\mB$ may refer to \ATA or \texttt{blas\_?syrk}, and to \Strassen or \texttt{blas\_?gemm}, respectively. As we shall see in Section~\ref{sec:ATAtest}, the real benefit of using our implementation of \ATA and \Strassen arises on matrices with larger size, therefore they are favourable when handling larger volumes of data.

\subsubsection{Computational and Communication Complexity of \ATAP}
\label{sec:CCatap}
In contrast to parallel algorithms for distributed matrices, \ATAP does not include any communication between processes at computation time, as the input matrix is scattered among distributed processes so that they own the exact portions of $\mA$ on which they have to operate.

\begin{proposition}
The computational cost of \ATAP (Algorithm~\ref{alg: AtAP}) on a matrix of size $n$ 
and with using $P$ processes, $C(n,P)$ is:
\begin{equation*}
    C(n,P) = O\left (\left ( \nicefrac{n}{2^{\ell(P)}} \right )^2\cdot \nicefrac{n}{2^{\ell(P) - 1} }\right),
\end{equation*}
if the load balancing parameter $\alpha$ is set to $0.5$.
\end{proposition}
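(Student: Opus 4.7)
The plan is to trace the arithmetic cost incurred by a single process along the path from its leaf of $\TaskTree$ back to the root, and to argue that it is entirely dominated by the leaf-level multiplication, since inner nodes only do reductions.

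First I would inspect Algorithm~\ref{alg: AtAP} and observe that the only computation performed at an inner node is the summation over sub-matrices returned by the children. At level $\ell$ of $\TaskTree$ the sub-matrix touched by such a sum has size $O((n/2^{\ell})^2)$, so even summing these contributions across \emph{every} level of the path yields at most $\sum_{\ell=0}^{\ell(P)}O((n/2^{\ell})^2)=O(n^2)$. This is strictly lower order than any reasonable leaf bound, so it can be absorbed into $O(\cdot)$.

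Second, I would carefully determine the dimensions of the matrices handled at the leaf. By the construction in Section~\ref{sec:taskassignment}, the first $\ell(P)-1$ parallel levels correspond to complete splits of \ATA/\textsc{RecursiveGEMM}, each of which halves \emph{both} the row and column dimensions of $\mA$. The last parallel level, in general, is \emph{incomplete}: rather than performing a full $2\times 2$ block split, processes divide sub-matrices with the vertical/horizontal tiling of Figure~\ref{fig:matmul-verthor-tiling}, which halves only one dimension. Consequently a process with an $\mA^T\mB$ leaf task is multiplying two matrices of size $(n/2^{\ell(P)-1})\times(n/2^{\ell(P)})$ to produce a $(n/2^{\ell(P)})\times(n/2^{\ell(P)})$ block, so that the multiplicative cost is exactly $(n/2^{\ell(P)})^2\cdot(n/2^{\ell(P)-1})$.

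Third, I would invoke the load-balancing argument of Section~\ref{sec:taskassignment}: the choice $\alpha=1/2$ is made precisely so that a process assigned an $\mA^T\mA$ leaf and a process assigned an $\mA^T\mB$ leaf incur asymptotically the same number of floating-point operations. Hence the above $\mA^T\mB$ cost is, up to constants, the cost for \emph{every} leaf, and since the preliminary tree construction is a BFS with $O(P)$ steps (negligible compared to the leaf cost for nontrivial $n$), the per-process bound $C(n,P)=O((n/2^{\ell(P)})^2\cdot n/2^{\ell(P)-1})$ follows.

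The most delicate point — and the one I would verify explicitly — is the asymmetric appearance of $2^{\ell(P)}$ versus $2^{\ell(P)-1}$, which depends on whether the last parallel level is complete. I would validate the count on the two regimes flagged in Equation~\ref{eq:numLevels}: a complete-level instance (e.g.\ $P=6$, where the six leaves match six tasks exactly), and an incomplete one (e.g.\ $P=16$, illustrated in Figure~\ref{fig:exectree}) where the vertical/horizontal tiling actually fires. In the complete case one gets $(n/2^{\ell(P)})^3$, which is still $O((n/2^{\ell(P)})^2\cdot n/2^{\ell(P)-1})$, so the uniform bound in the statement is safe; the tightness comes from the incomplete case. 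Once these two cases are reconciled, the proposition follows by combining the leaf bound with the lower-order contributions above.
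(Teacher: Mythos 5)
Your proposal is correct and follows essentially the same route as the paper's own (much terser) proof: both reduce the cost to the dominant leaf-level $\mA^T\mB$ multiplication on blocks of size at most $\nicefrac{n}{2^{\ell(P)}}\times\nicefrac{n}{2^{\ell(P)-1}}$, with the $\alpha=\nicefrac{1}{2}$ balancing making the $\mA^T\mA$ leaves comparable. Your additional bookkeeping — the $O(n^2)$ bound on inner-node reductions, the $O(P)$ tree construction, and the complete-versus-incomplete last-level check that explains the $2^{\ell(P)}$ vs.\ $2^{\ell(P)-1}$ asymmetry — is all consistent with the paper and simply makes explicit what the paper leaves implicit.
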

\begin{proof}
$C(n,P)$ depends on the number of recursive levels that can be layered with the available resources and on $\alpha$. For $\alpha = 0.5$, the computational complexity of \ATAP is given by the time for computing $\mA^T\mB$ on matrices of size at most $\nicefrac{n}{2^{\ell(P)}}\times \nicefrac{n}{2^{\ell(P) - 1}}$, that is $O\left (\left ( \nicefrac{n}{2^{\ell(P)}} \right )^2\cdot \nicefrac{n}{2^{\ell(P) - 1} }\right)$, where $\ell(P)$ is the number of parallel levels defined in Equation~\ref{eq:numLevels}. 
\end{proof}

We express the communication cost for matrix distribution and result retrieval in terms of latency and bandwidth costs of a distributed algorithm, denoted with $L(n,P)$ and $BW(n,P)$, respectively, using the same definitions introduced in \cite{ballard2013graph} and adopted also in \cite{COSMA}. Latency cost is the communicated-message count, whereas bandwidth is expressed in terms of communicated-word count. Messages and words counts are computed along the critical path of the distributed algorithm, as defined in~\cite{yang1988critical}. 
\begin{proposition}
The latency of \ATAP on a matrix of size $n\times n$ and with $P$ processes is $L(n,P) = O(2[7\cdot (\ell(P) - 1) + 5])$. Its bandwidth is $BW(n,P) \leq 6(\nicefrac{n}{2})^2 + \frac{n(n+2)}{2} + \nicefrac{7}{6}n^2(1-\nicefrac{1}{4^{\ell(P)-2}})$. 
\end{proposition}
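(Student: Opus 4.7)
\medskip
\noindent\textbf{Proof plan.} My plan is to analyse the distribution and retrieval phases separately, since no messages are exchanged at compute time (the scheduler produces matching send/receive tasks on the up-going and down-going paths of $\TaskTree$), and then to add the two contributions. The key observation is that, by the definition of the critical path in \cite{yang1988critical}, both $L(n,P)$ and $BW(n,P)$ are obtained by walking from the root of $\TaskTree$ to one of its deepest leaves along a single path, counting one message / the corresponding payload for every parent-to-child communication, and then mirroring the same walk on the way back up. This is the reason for the overall factor of $2$ that appears in the latency expression and the symmetric structure in the bandwidth bound.

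\medskip
\noindent\textbf{Latency.} I would first argue that along one downward path of $\TaskTree$ each non-leaf node contributes a constant number of point-to-point messages. Referring to Section~\ref{sec:simulexec}, an internal $\mA^T\mA$ node has $6$ children (four recursive $\ATA$ calls and two $\Strassen$ calls) and an internal $\mA^T\mB$ node has $8$ children (the eight recursive calls of \textsc{RecursiveGEMM}); in both cases the parent dispatches the sub-matrices of $\mA$ sequentially, so the maximum number of sends along the path at one internal level is $7$ (the parent already owns one of the blocks it will operate on, hence one message is saved). Summing $7$ per level over the $\ell(P)-1$ internal levels traversed from the root gives the $7(\ell(P)-1)$ term; the additive $5$ accounts for the messages required at the last (possibly incomplete) level, where vertical/horizontal tiling is used rather than a full $2\times 2$ split as explained at the end of Section~\ref{sec:ATAPdet}. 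Doubling for the retrieval phase yields $L(n,P)=O\bigl(2[7(\ell(P)-1)+5]\bigr)$.

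\medskip
\noindent\textbf{Bandwidth.} I would bound the payload crossing the critical path level by level. At level $\ell(P)$ the deepest leaf receives the two sub-matrices on which it operates; since $\mA^T\mA$ leaves return a packed lower-triangular block of size $\nicefrac{n}{2^{\ell(P)}}\times\nicefrac{n}{2^{\ell(P)}}$, the returned payload at the leaf is exactly $\tfrac{n(n+2)}{2}$ when $n$ is normalised for that level; this gives the packed-triangular term. The top-most split contributes $6(\nicefrac{n}{2})^2$: the root distributes, along the chosen path, at most six $(\nicefrac{n}{2})^2$ blocks of $\mA$ (the blocks $\mA_{1,1},\mA_{1,2},\mA_{2,1},\mA_{2,2}$ that the first level requires, counted with the multiplicity induced by the $\mA^T\mA$/$\mA^T\mB$ split with $\alpha=\nicefrac12$). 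For the intermediate levels $l=2,\dots,\ell(P)-1$ I would bound the per-level traffic by a constant times $(\nicefrac{n}{2^l})^2$ and sum the geometric series
\begin{equation*}
    \sum_{l=2}^{\ell(P)-1} c \left(\tfrac{n}{2^l}\right)^2
    = c\, n^2\sum_{l=2}^{\ell(P)-1} \tfrac{1}{4^l}
    = \tfrac{c}{12}\, n^2\left(1-\tfrac{1}{4^{\ell(P)-2}}\right),
\end{equation*}
choosing $c=14$ (matching the $7$-child upper bound on both the send and the receive side) to recover the stated $\tfrac{7}{6}n^2(1-\nicefrac{1}{4^{\ell(P)-2}})$.

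\medskip
\noindent\textbf{Where the difficulty lies.} The routine part is the geometric sum; the delicate point is a clean level-by-level accounting that (i) correctly charges packed-triangular payloads to $\mA^T\mA$ nodes and square payloads to $\mA^T\mB$ nodes, and (ii) handles the last, possibly \emph{incomplete} parallel level, in which vertical/horizontal tiling replaces the $2\times 2$ split. I would isolate this boundary case in a short lemma stating that the incomplete level contributes at most $5$ additional messages and a payload bounded by the two explicit terms $6(\nicefrac{n}{2})^2$ and $\tfrac{n(n+2)}{2}$; once this lemma is established, the main claim follows by the inductive sum described above.
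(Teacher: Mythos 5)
Your overall skeleton is the same as the paper's: the critical path is identified with the chain of communications performed by the root process $p_0$, the cost is split into a distribution phase and a retrieval phase (giving the factor $2$), and the intermediate levels are summed as a geometric series with ratio $\nicefrac{1}{4}$. However, two of your level-by-level attributions are not correct, and they matter because the non-geometric terms in the bound come precisely from the level you mis-charge. In the paper, both the additive $5$ in the latency and the terms $6(\nicefrac{n}{2})^2$ and $\frac{n(n+2)}{2}$ in the bandwidth originate from the \emph{first} level of $\TaskTree$, where the root performs the $6$-way $\mA^T\mA$ split ($4$ recursive $\mA^T\mA$ children plus $2$ $\mA^T\mB$ children): $p_0$ exchanges messages with its $5$ siblings (hence the $+5$, while the $7$ per level comes from the $8$-way \textsc{RecursiveGEMM} splits the root follows at levels $2,\dots,\ell(P)$); on distribution it ships half-size blocks of $\mA$ of total size $O((\nicefrac{n}{2})^2)$ per sibling, and on retrieval it receives one square $(\nicefrac{n}{2})^2$ block from the $\mA^T\mB$ sibling and four packed lower-triangular blocks of $\frac{(n/2)(n/2+1)}{2}=\frac{n(n+2)}{8}$ words each from the four $\mA^T\mA$ siblings, which is exactly where $\frac{n(n+2)}{2}$ comes from. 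You instead attribute the $+5$ to the incomplete last level and the $\frac{n(n+2)}{2}$ to the deepest leaf ``with $n$ normalised for that level''; the latter step fails outright, since a packed triangular block at depth $\ell(P)$ contains only $\Theta\bigl((\nicefrac{n}{2^{\ell(P)}})^2\bigr)$ words, nowhere near $\frac{n(n+2)}{2}$.

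The second gap is that the geometric-series constant is fitted rather than derived: you announce ``choosing $c=14$ \dots to recover the stated bound.'' The paper obtains $14 = 2\cdot 7$ honestly, as at most $7$ messages of size $(\nicefrac{n}{2^l})^2$ in each direction at each level $l \ge 2$, and the deferred lemma you propose for the incomplete last level is doing no work here --- the paper's bound does not need it, because the incomplete level is already covered by the ``at most $7$ children'' count. I would also note that your summation range $l=2,\dots,\ell(P)-1$ differs from the paper's $l=2,\dots,\ell(P)$ (which is why you land exactly on the exponent $\ell(P)-2$ while the paper's own series evaluates to exponent $\ell(P)-1$); this is a minor inconsistency in the paper itself, but in your write-up it forces you to account for level $\ell(P)$ separately, which is precisely where the erroneous $\frac{n(n+2)}{2}$ charge appears. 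To repair the proof, move the $5$-sibling/$6$-block/packed-triangular accounting to level $1$ and let the $7$-children bound cover all deeper levels in both directions.
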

\begin{proof}
In \ATAP, the critical path corresponds to the sequence of communication operations carried out by the root process $p_0$. After the first parallel level, $p_0$ works on a $\mA^T\mB$ task and shares its workload with 7 other processes at each parallel level. When the compute phase is over, at each level $l\in \{2,\ldots,\ell(P)\}$ process $p_0$ collects partial results from its (at most) seven children; at level $l=1$, it retrieves the entire matrix $\mC=\mA^T\mA$ by combining together the results of its five siblings. This operation is carried out both for data distribution and result collection. Hence, $L(n,P) = O(2[7\cdot (\ell(P) - 1) + 5])$.\\
During the data distribution phase, message sizes (i.e., portions of input matrix $\mA$) decrease when descending from the root down to the leaves of $\TaskTree$. In the first level, $p_0$ distributes two matrices of size $\nicefrac{n}{2}\times \nicefrac{n}{2}$ to the other process that is in charge to carry out $\mA^T\mB$ tasks, and one sub-matrix of the same size to each of its four siblings that have to compute $\mA^T\mA$. For each level $l\in\{2,\ldots,\ell(P)\}$, the root process sends matrices of size $\nicefrac{n}{2^l}$ to at most 7 other processes. Hence, during the distribution phase, $BW(n,P)$ is $O(5\left (\nicefrac{n}{2}\right )^2 + 7\cdot \sum_{l = 2}^{\ell(P)}\left (\nicefrac{n}{2^l}\right )^2)=O(5(\nicefrac{n}{2})^2 + \nicefrac{7}{12}n^2 (1 - \nicefrac{1}{4}^{\ell(P)-2}))$. 
With similar considerations and taking into account the fact that processes sending symmetric portions of $\mC$ only store its lower triangular part ($\text{low}(\mC)$), it holds that the bandwidth during the result retrieval phase 
amounts to $O(\left (\nicefrac{n}{2}\right )^2 + 4 (\nicefrac{n(n+2)}{8}) + 7\cdot \sum_{l=2}^{\ell(P)} \left (\nicefrac{n}{2^{l}}\right )^2) = O(\left (\nicefrac{n}{2}\right )^2 + \nicefrac{n(n+2)}{2} +\nicefrac{7}{12}n^2 (1 - \nicefrac{1}{4}^{\ell(P)-2})$. The thesis follows by summing together the two components. 
\end{proof}

\begin{figure}[h!]
    \centering
    \subfigure[Elapsed time.\label{fig:timeATAseq}]{\includegraphics[width=0.49\columnwidth]{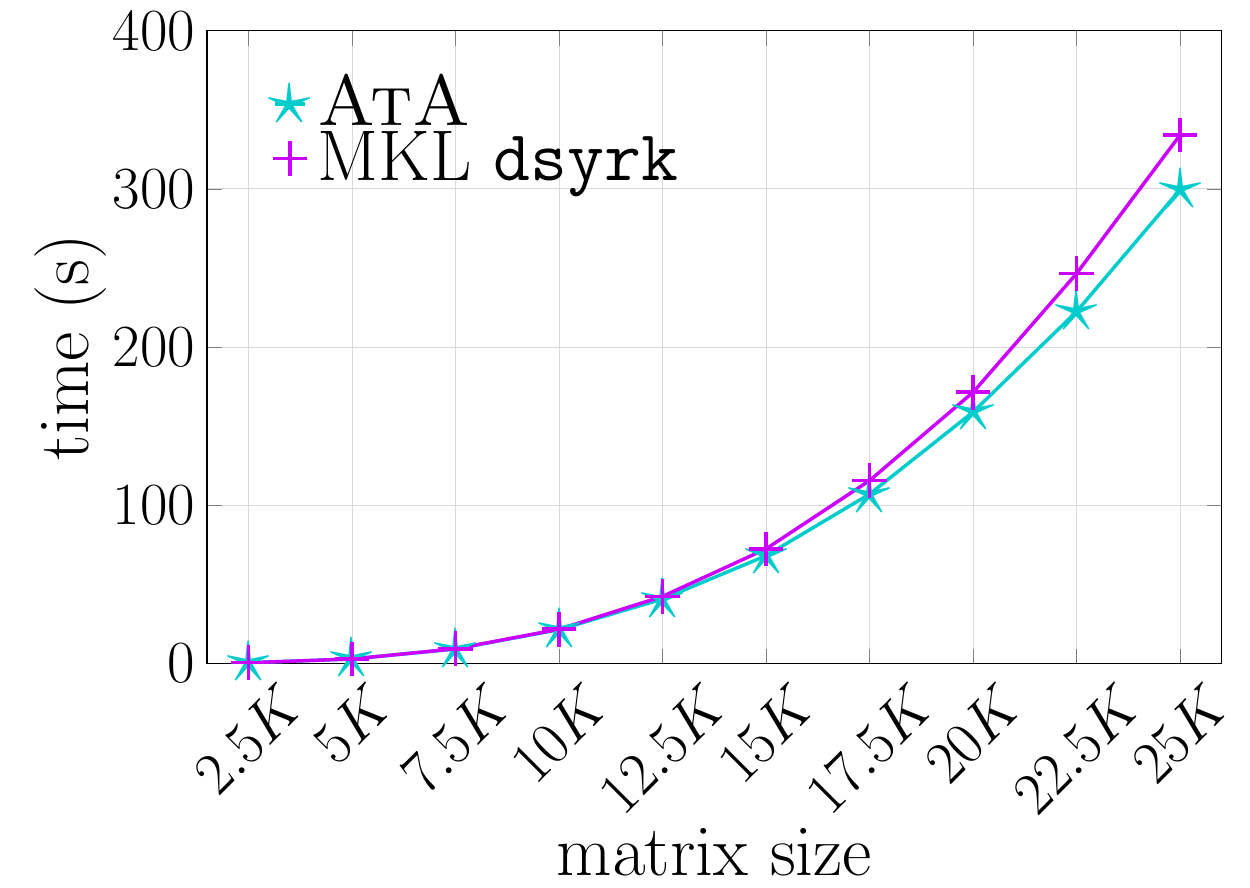}}
    \subfigure[Effective GFLOPs.\label{fig:EGATAseq}]{\includegraphics[width=0.49\columnwidth]{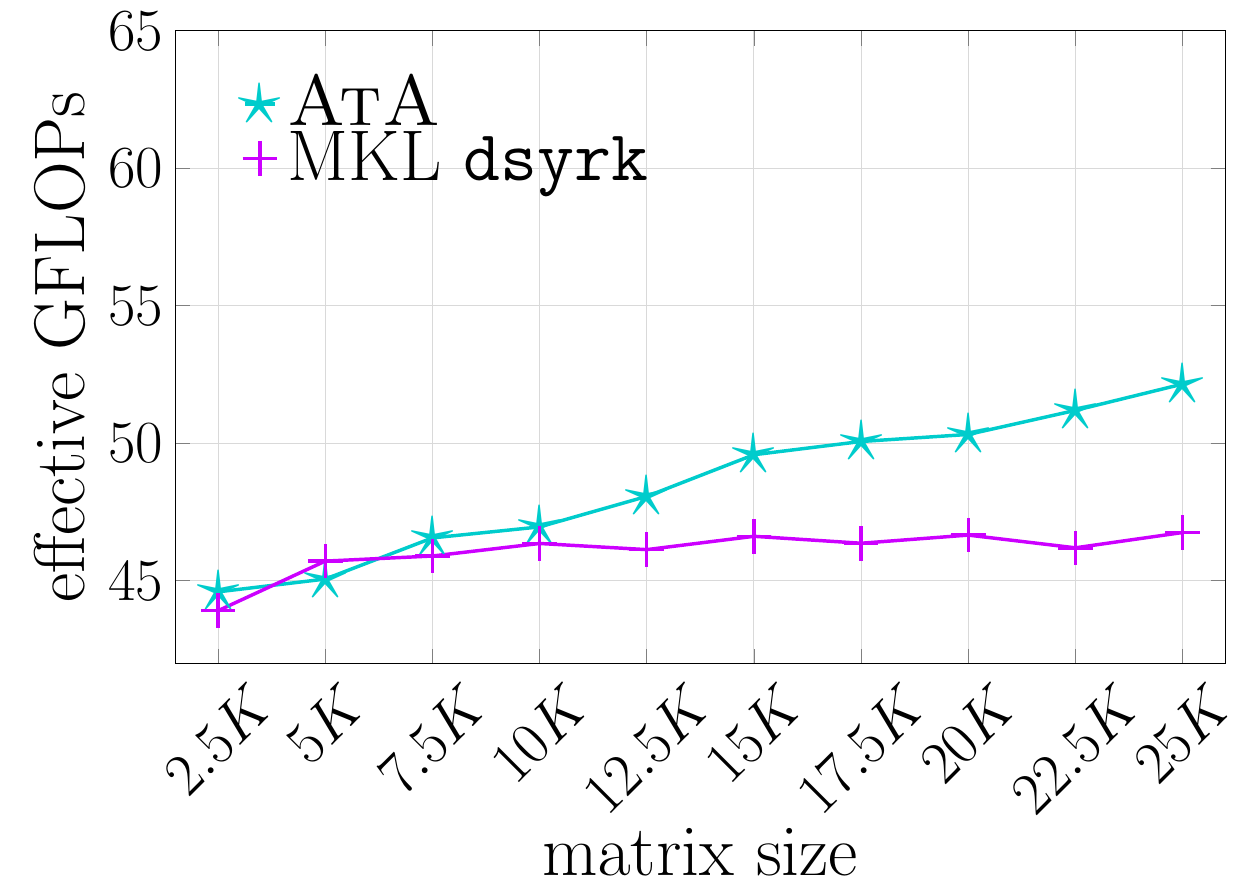}}
    \caption{\ATA vs Intel MKL \texttt{dsyrk}}
    \label{fig:ATASeq}
\end{figure}
\begin{figure}[h!]
    \vspace{-0.3cm}
    \subfigure[Elapsed time.\label{fig:timeStrassenseq}]{\includegraphics[width=0.49\columnwidth]{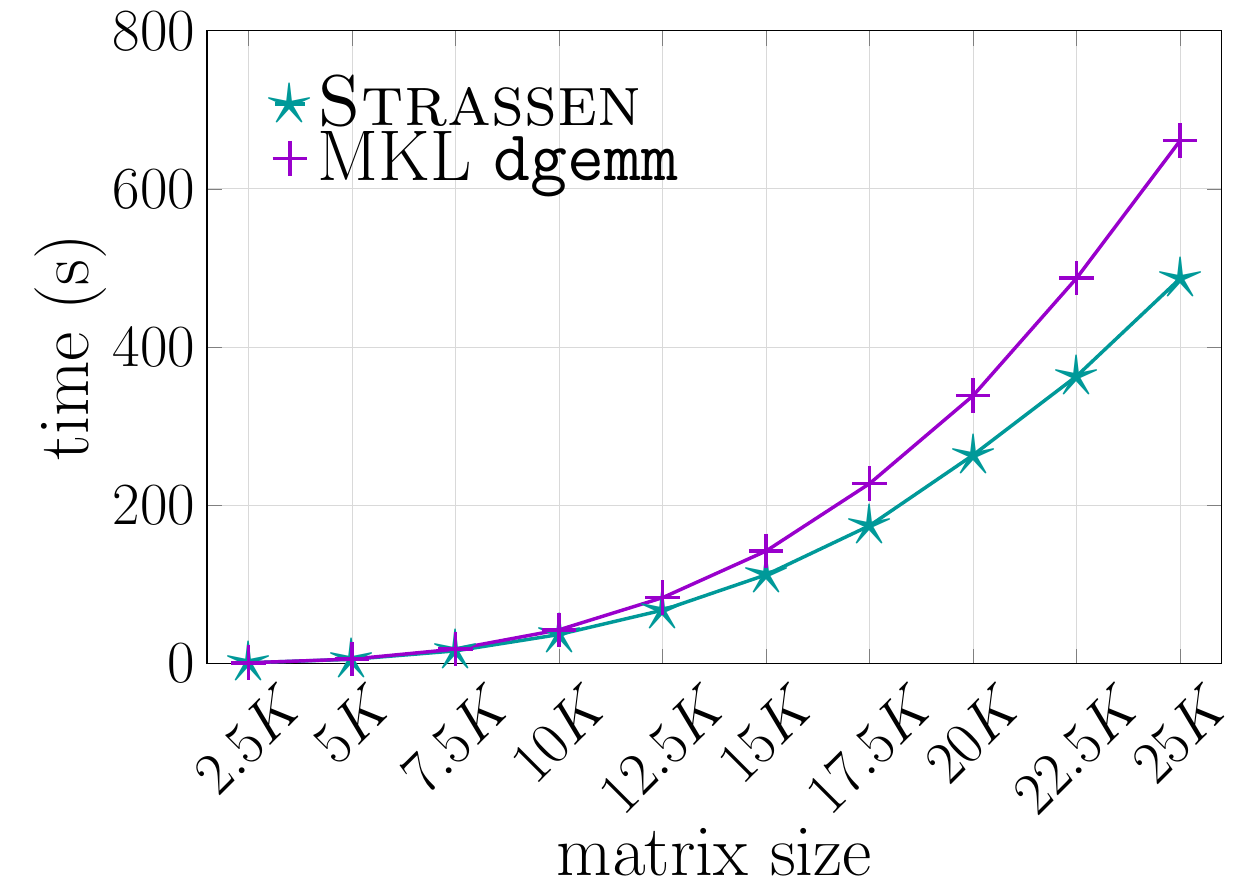}}
    \subfigure[Effective GFLOPs.\label{fig:EGtimeStrassenseq}]{\includegraphics[width=0.49\columnwidth]{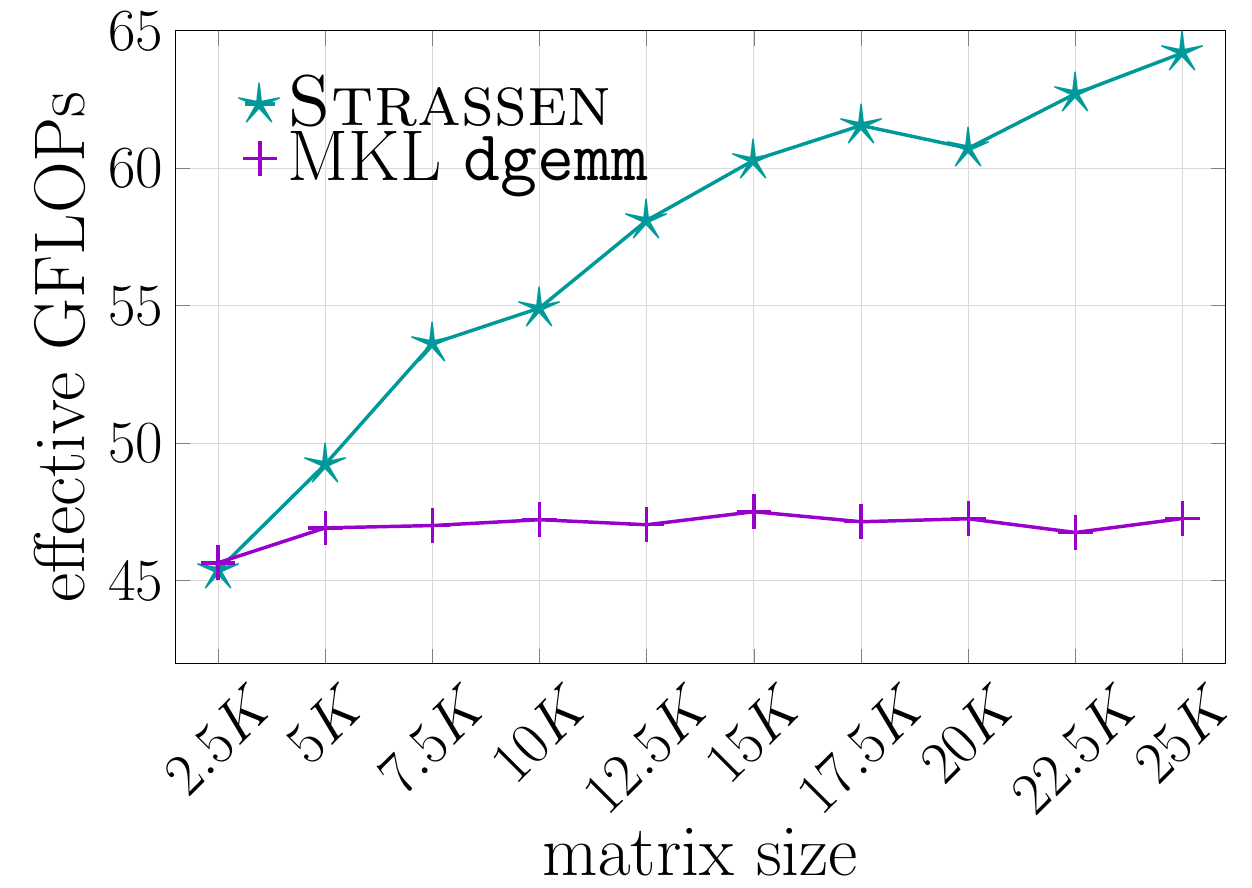}}
    \caption{\Strassen vs Intel MKL \texttt{dgemm}}
    \label{fig:StrassenSeq}
\end{figure}

From this analysis, we see that computation has the prominent role in time complexity $T(n,P)=C(n,P)+L(n,P)+BW(n,P)$. 
This fact will be confirmed by our experimental results, presented in Section~\ref{sec:ATAtest}, where we see how increasing the matrix sizes provides an always increasing benefit in using the distributed algorithm, proving that communication cost $L(n, P) + BW(n, P)$ is absorbed by the computational cost, $C(n, P)$, for growing values of $n$. 

\section{Performance Evaluation}
\label{sec:ATAtest}
We evaluate the performance of our algorithms with an extensive set of experiments over multiple benchmarks. Our code is available at \url{https://github.com/filthynobleman/AtA}.

\subsection{Experimental Setup}
All tests reported in this section were run on TeraStat\footnote{https://www.dss.uniroma1.it/en/node/6554}, a cluster of 12 compute nodes, each equipped with 2 sockets of Intel Xeon E5-2630v3 8 cores, 2.4 Ghz, 4 GB RAM per core.

We test our algorithms and benchmark solutions on square and tall matrices, generated randomly. We carry out experiments in both single and double floating-point precision, to highlight the fact that our algorithm achieves good performance in both settings.

 In the tests, we exploit the Intel Math Kernel Library (MKL) both by integrating BLAS routines for basic matrix operations, and for the validation of the proposed algorithms through performance comparisons with shared and distributed memory parallel benchmark solutions. MKL is a  framework that includes routines and functions optimized for Intel and compatible processor-based computers, and provides C/C++ interfaces and the acceleration of libraries for Linear Algebra (including BLAS and ScaLapack) within several third-party math libraries. \cite{wang2014intel,intelMKL}.

\subsection{Metrics}
To compare the performance of our algorithms against benchmark methods, we use the average elapsed time in seconds and the effective GFLOPs. Effective GFLOPs  is a measure for comparing classical and fast matrix-multiplication algorithms. For classical algorithms, which perform $2n^3$ floating point operations, Equation~\ref{eq:effGFLOPs} gives the actual GFLOPs; for fast matrix-multiplication algorithms, it gives the performance relative to classical algorithms, but does not accurately represent the number of floating point operations performed  \cite{demmel2013communication}. For fair comparisons, we calculate the metrics as: 
\begin{equation}
\label{eq:effGFLOPs}
    \text{effective GFLOPs} = \frac{r n^3}{\text{execution time in seconds}\cdot 10^9}
\end{equation}
where $r=1$ when we test algorithms specifically built for the $\mA^T\mA$ product, whereas $r=2$ when algorithms for the general matrix multiplication are tested. 

\subsection{Sequential}
Figures~\ref{fig:ATASeq} and \ref{fig:StrassenSeq} show the execution time and effective GFLOPs of the sequential \ATA and \Strassen routines, respectively. Their performance is compared to the Intel MKL counterparts: \texttt{dsyrk} and \texttt{dgemm}. The experiments are carried out on matrices of growing matrix size (from $2.5\cdot 10^3$ to $2.5\cdot 10^4$), and run on a single Intel core. 
The time difference between our solutions and the ones implemented by Intel MKL grows with the matrix size, reflecting the lower computational cost of our approach. Figure~\ref{fig:StrassenSeq} proves how Strassen's algorithm benefits from the pre-memory-allocation strategy described in Section~\ref{sec:ATAS(n)}.

\subsection{Shared memory}
For evaluating the shared memory parallel implementation of the $\mA^T\mA$ product, \ATAS, we compare it against the Intel MKL implementation of the BLAS routine \texttt{ssyrk}, for single precision symmetric rank-$K$ update. For both methods, we always use a 16 thread setup, and we analyse the execution time and the effective GFLOPs (Equation~\ref{eq:effGFLOPs} with $r=1$) while varying the number of available cores. In light of the sequential experiments shown in Figures~\ref{fig:ATASeq} and \ref{fig:StrassenSeq}, we compare \ATAS and MKL \texttt{ssyrk} on larger matrices, where tests highlight more interesting results.  In particular, we run experiments on square matrices of size $3\cdot 10^4\times 3\cdot 10^4$, $4\cdot 10^4\times 4\cdot 10^4$ and on tall matrices of size $6\cdot 10^4\times 5\cdot 10^3$. Figure~\ref{fig:shared} summarizes our results. As anticipated by the study of the computational complexity, the execution time is reduced by $\nicefrac{1}{4}$ at each complete parallel level. 
Figures~\ref{fig:shar-30time}, \ref{fig:shar-40time} and \ref{fig:shar-605time} show how our algorithm can compete with the MKL implementation when the core availability is large, and that significantly outperforms the Intel implementation in the $P \leq 10$ cores setup. Furthermore, we show in Figures~\ref{fig:shar-30gflop}, \ref{fig:shar-40gflop} and \ref{fig:shar-605gflop} that $\ATAS$ is capable not only of accomplishing a large amount of floating point operations per second, but also that its performance growth rate is consistent with the step-wise behaviour of the time complexity studied in Section~\ref{sec:ATAStime}. This justifies sporadic thinnings in performance gap between the two methods.  
\begin{figure}[t!]
    \subfigure[Elapsed time, $\mA\in \mathbb{R}^{30K\times 30K}$.\label{fig:shar-30time}]{\includegraphics[width=0.49\columnwidth]{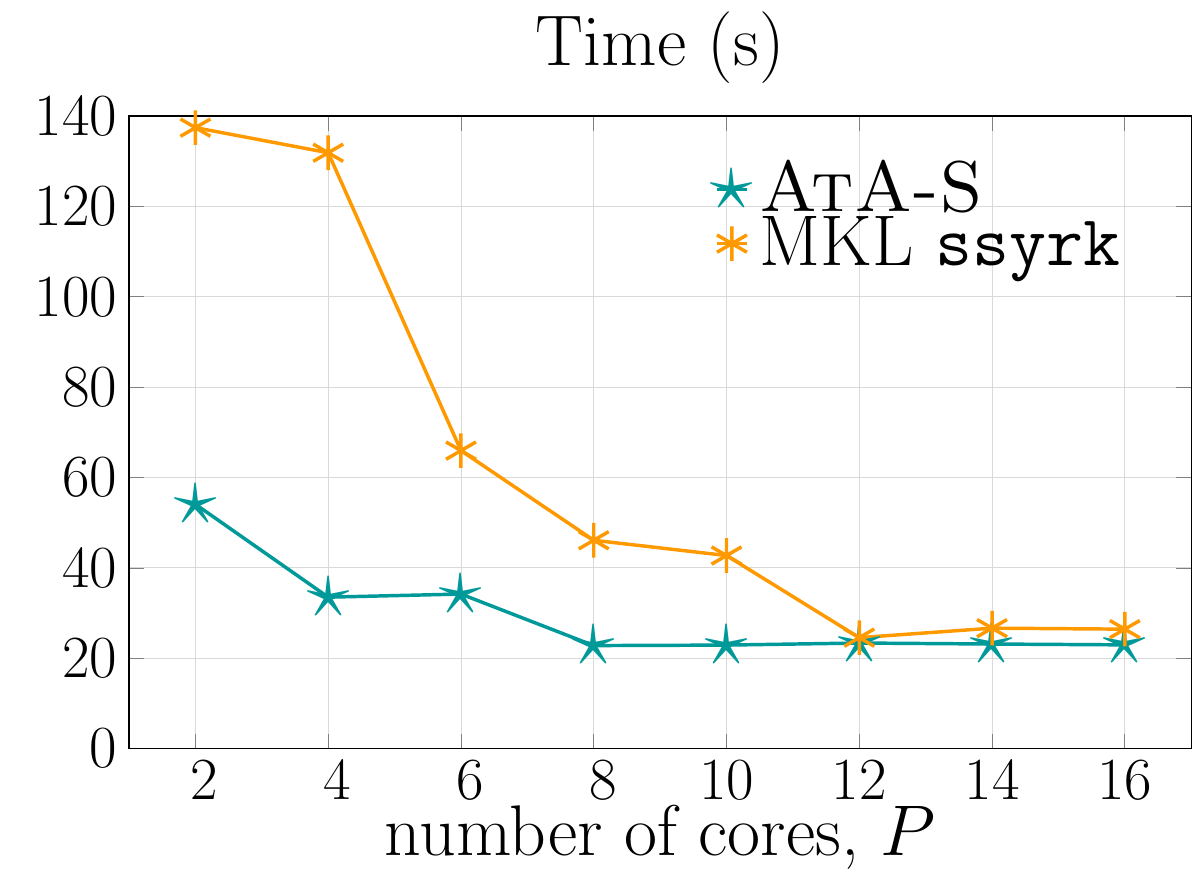}}
    \subfigure[EGs, $\mA\in \mathbb{R}^{30K\times 30K}$.\label{fig:shar-30gflop}]{\includegraphics[width=0.49\columnwidth]{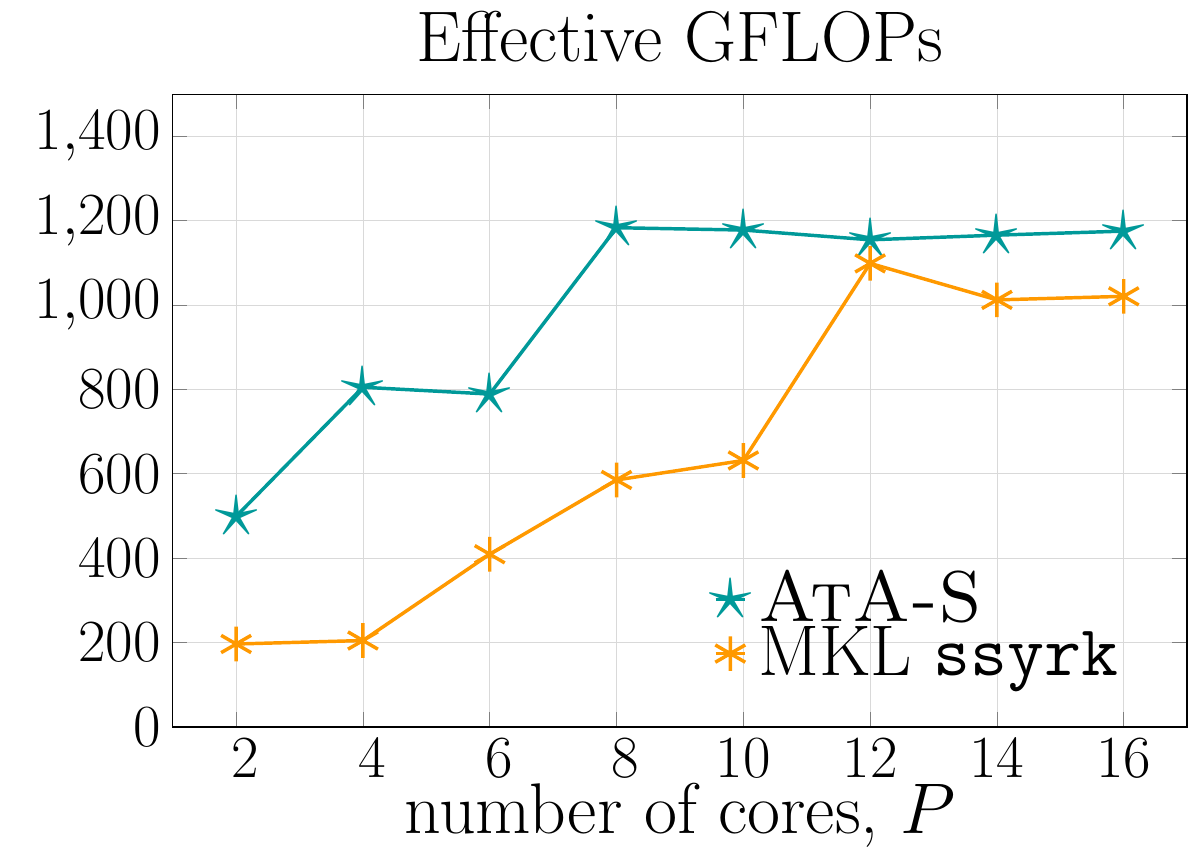}}
    
    \subfigure[Elapsed time, $\mA\in \mathbb{R}^{40K\times 40K}$.\label{fig:shar-40time}]{\includegraphics[width=0.49\columnwidth]{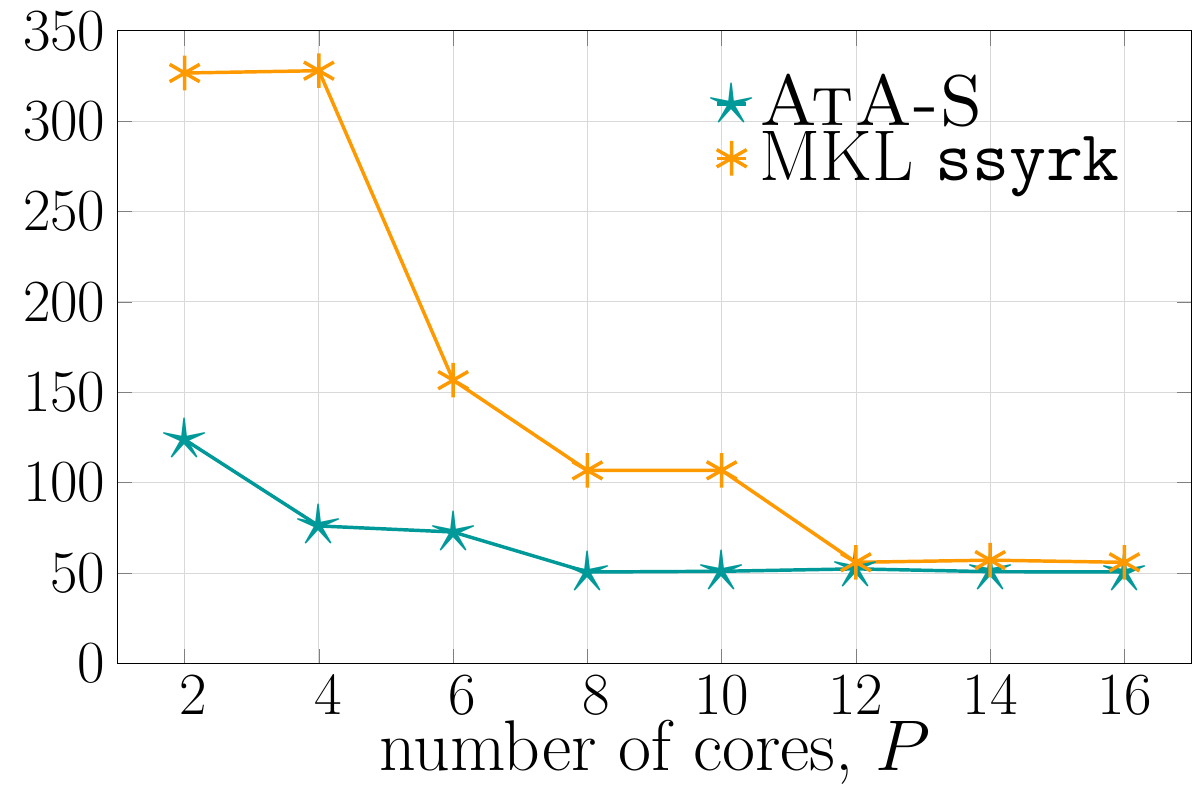}}
    \subfigure[EGs, $\mA\in \mathbb{R}^{40K\times 40K}$.\label{fig:shar-40gflop}]{\includegraphics[width=0.49\columnwidth]{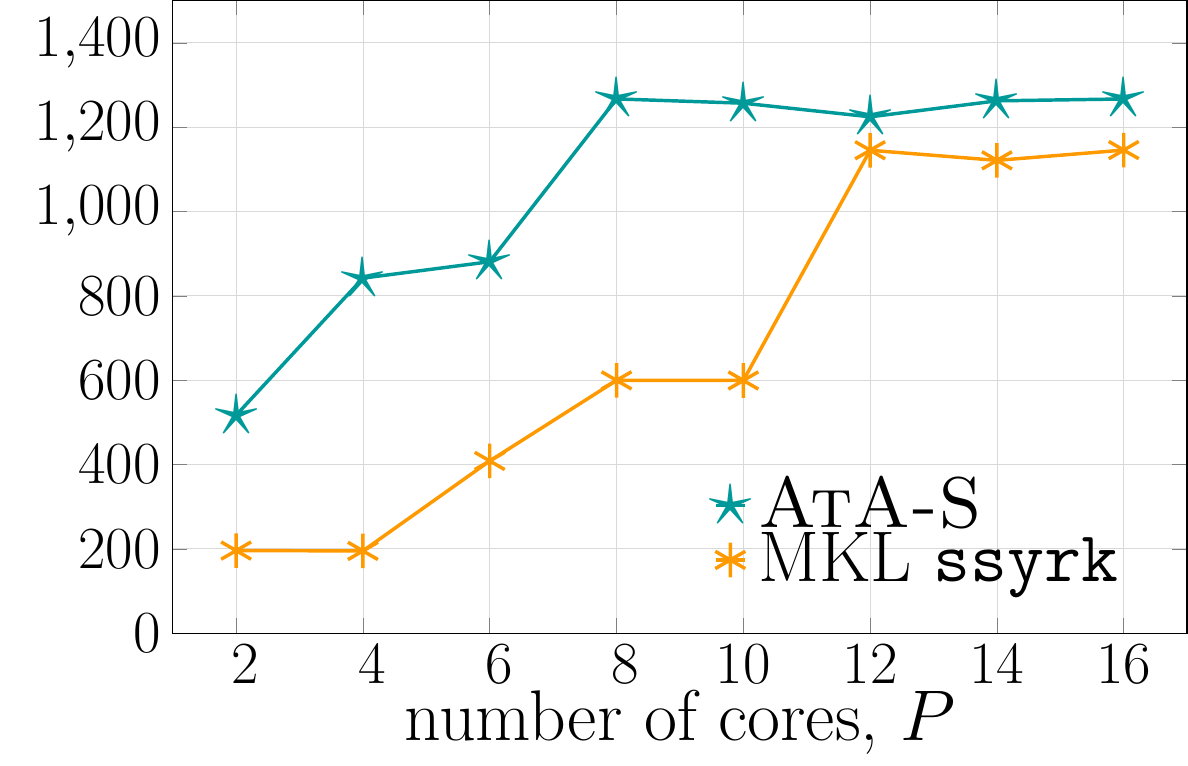}}

    \subfigure[Elapsed time, $\mA\in \mathbb{R}^{60K\times 5K}$.\label{fig:shar-605time}]{\includegraphics[width=0.49\columnwidth]{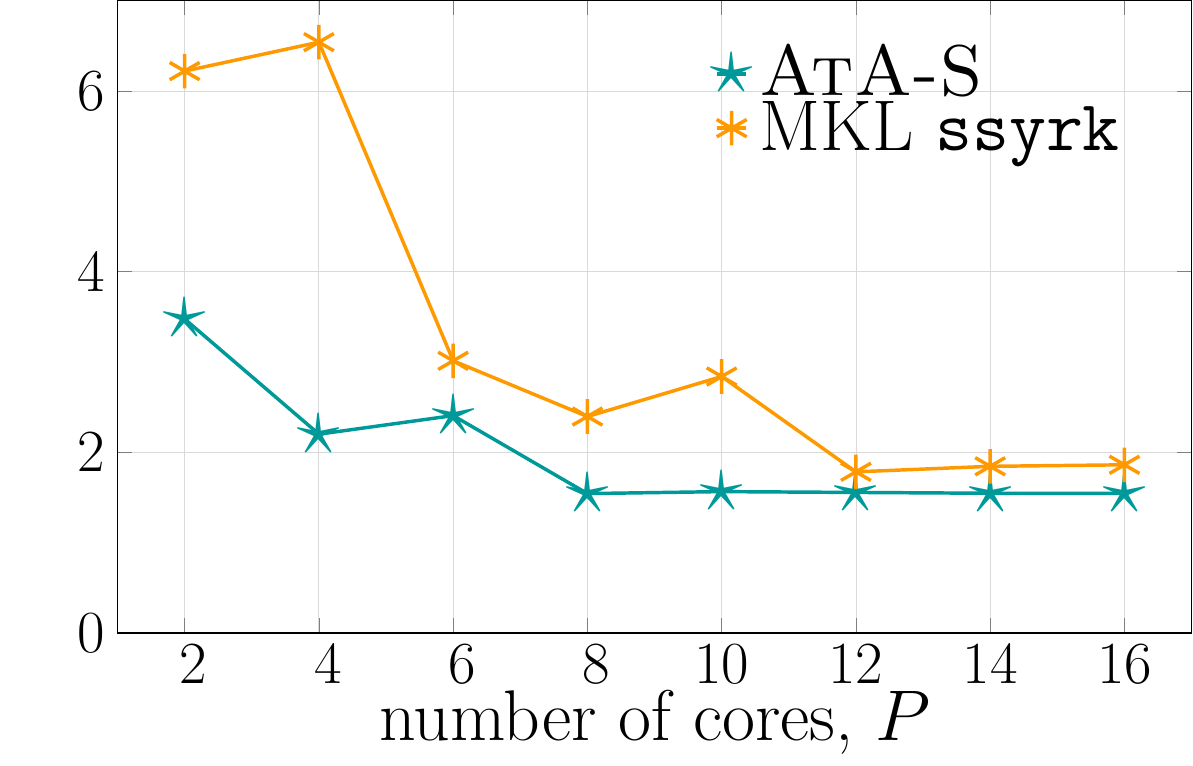}}
    \subfigure[EGs, $\mA\in \mathbb{R}^{60K\times 5K}$.\label{fig:shar-605gflop}]{\includegraphics[width=0.49\columnwidth]{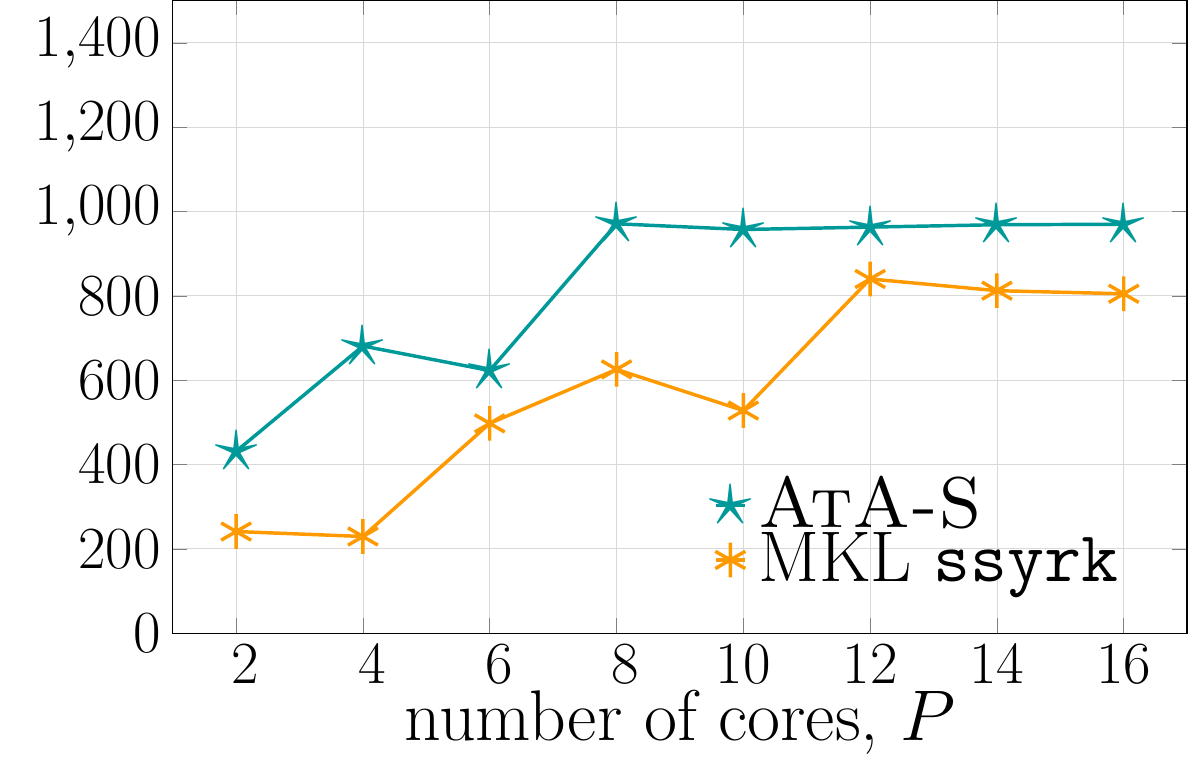}}
    
    \caption{Experimental results of \ATAS and Intel MKL \texttt{dsyrk} in terms of elapsed time in seconds (left column) and effective GFLOPs (right column), varying the number of available cores $P$ on fixed matrix sizes with a 16 threads configuration. }
    \label{fig:shared}
\end{figure}
From Figure~\ref{fig:shared}, we can observe that the performance of both methods stall when more than 8 cores are used. Indeed, multi-threaded MKL automatically chooses the optimal number of threads (in our architecture, this corresponds to 16 threads). For a fair comparison, we use the same setup in \ATAS. Performance scales with the number of available cores, but, when hyper-threading is enabled, 8 cores are enough to reach the 16-thread plateau. Therefore, performance cannot increase significantly for $P>8$.

\subsection{Distributed memory}
To complete our performance evaluation, we also compare our implementation for distributed architectures of \ATA, \ATAP, with fast distributed algorithms for matrix multiplication. We recall that \ATAP differs from standard methods for distributed matrix multiplication, as it does not perform computations on distributed matrices. Instead, in \ATAP the input matrix $\mA$ is only stored by the root process, $p_0$, that first distributes it among other processes cooperating to perform the $\mA^T\mA$ product, and then collects the partial result of each process to combine them. This approach makes our method unsuitable for distributed chains of operations, since for every operation, the matrix must be repeatedly scattered and gathered back, thus introducing communication overhead, but our results highlight that it is an efficient alternative for distributing single $\mA^T\mA$ operations. At the current state-of-the-art, there are a variety of methods for multiplying distributed matrices, but in the most recent literature there are three algorithms which stand out:


\begin{figure*}[!ht]
    \centering
    \subfigure[Elapsed time. $\mA,\mB \in \mathbb{R}^{10K\times 10K}$.\label{fig:time10kD}]{\includegraphics[width=0.5\columnwidth]{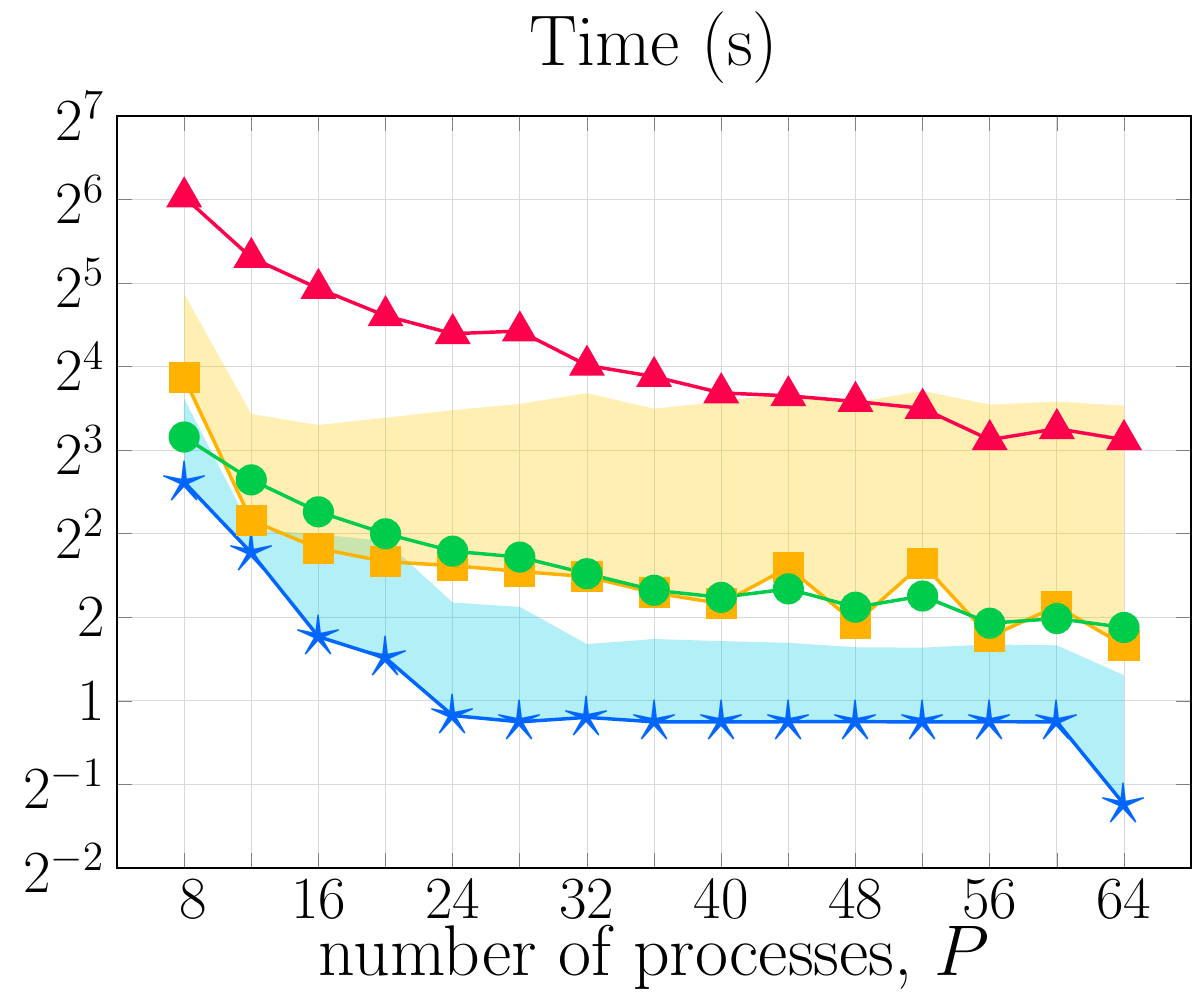}}
    \subfigure[EGs. $\mA,\mB \in \mathbb{R}^{10K\times 10K}$.\label{fig:EG10kD}]{\includegraphics[width=0.5\columnwidth]{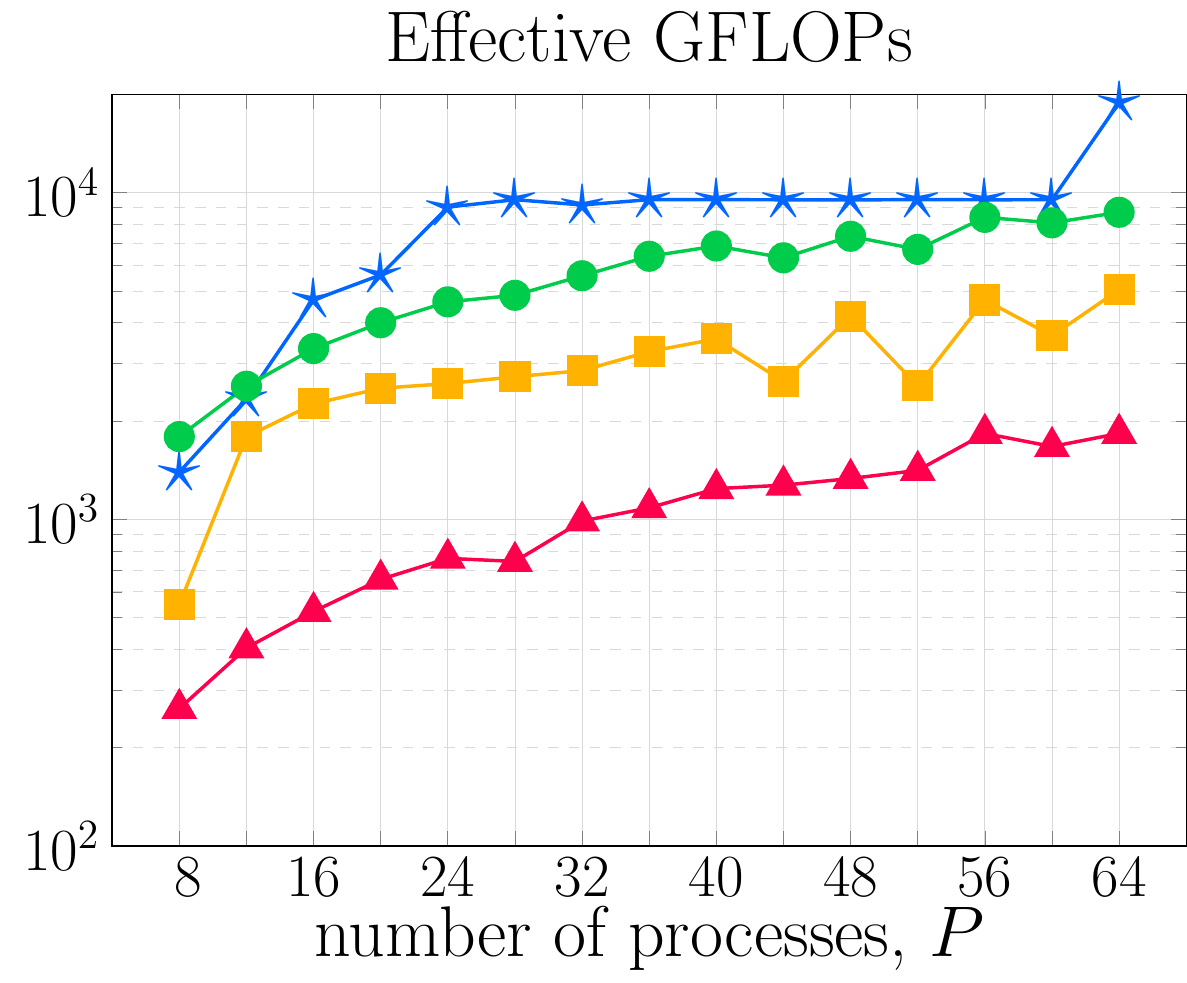}}
    \subfigure[TPP $\mA,\mB \in \mathbb{R}^{10K\times 10K}$.\label{fig:TPP10k}]{\includegraphics[width=0.5\columnwidth]{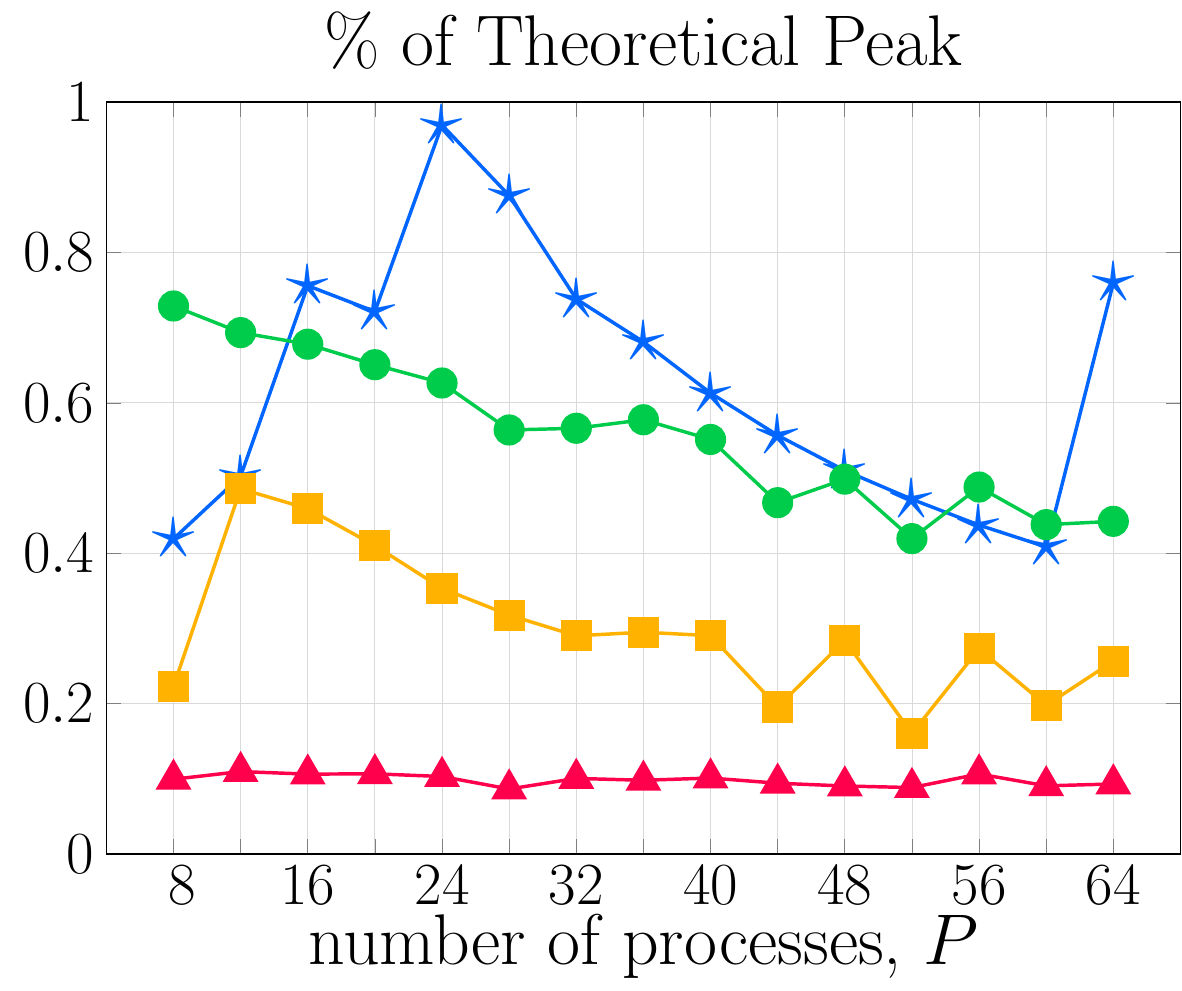}}
    \subfigure{\includegraphics[width=0.4\columnwidth]{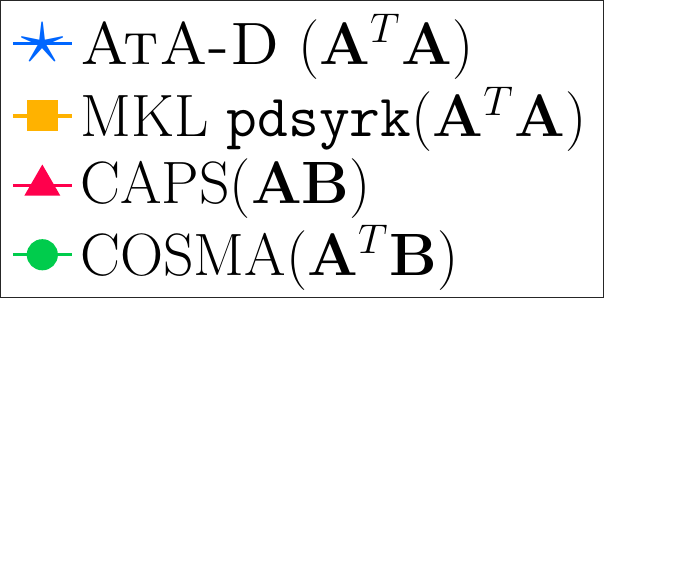}}
    
    \noindent
    \renewcommand{\thesubfigure}{(d)}
    \subfigure[Elapsed time. $\mA,\mB \in \mathbb{R}^{20K\times 20K}$.\label{fig:time20kD}]{\includegraphics[width=0.5\columnwidth]{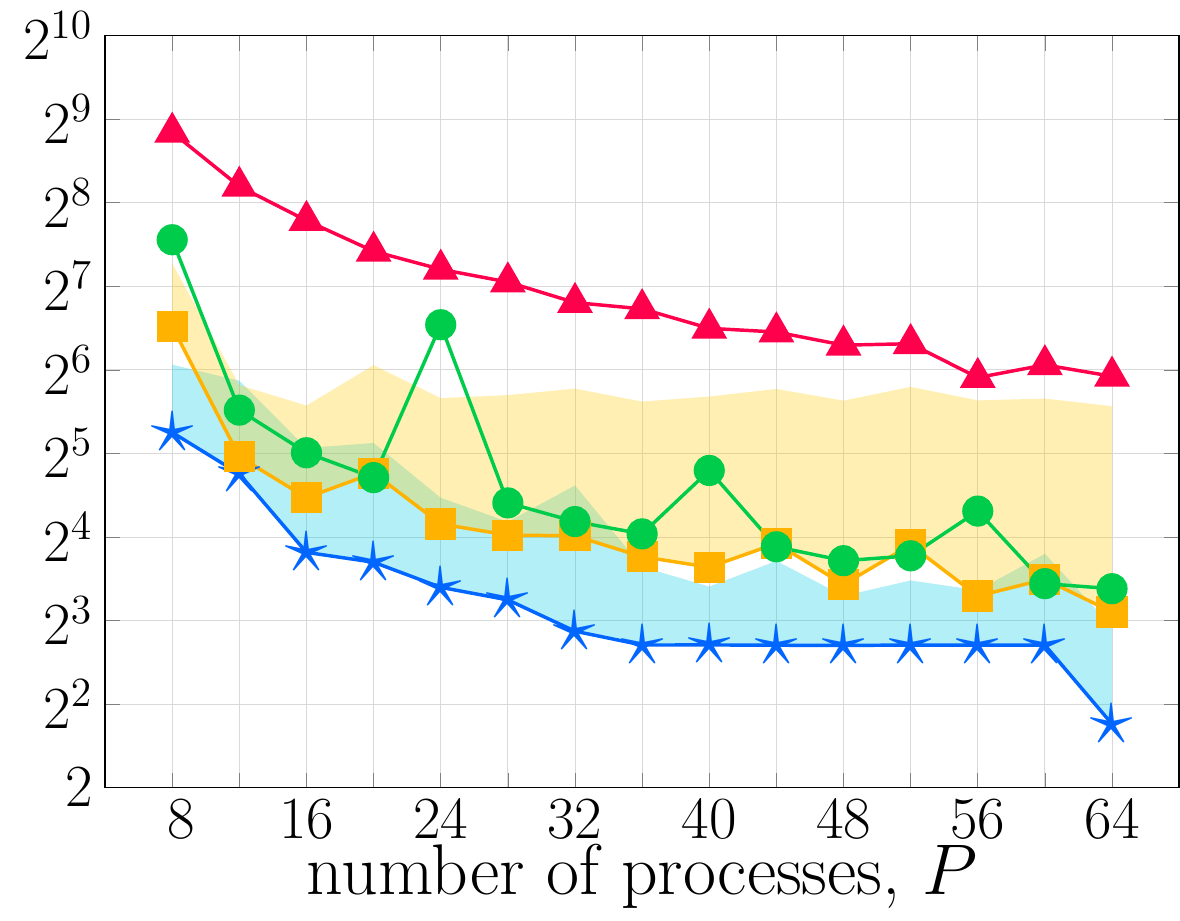}}
    \renewcommand{\thesubfigure}{(e)}
    \subfigure[EGs. $\mA,\mB \in \mathbb{R}^{20K\times 20K}$.\label{fig:EG20kD}]{\includegraphics[width=0.5\columnwidth]{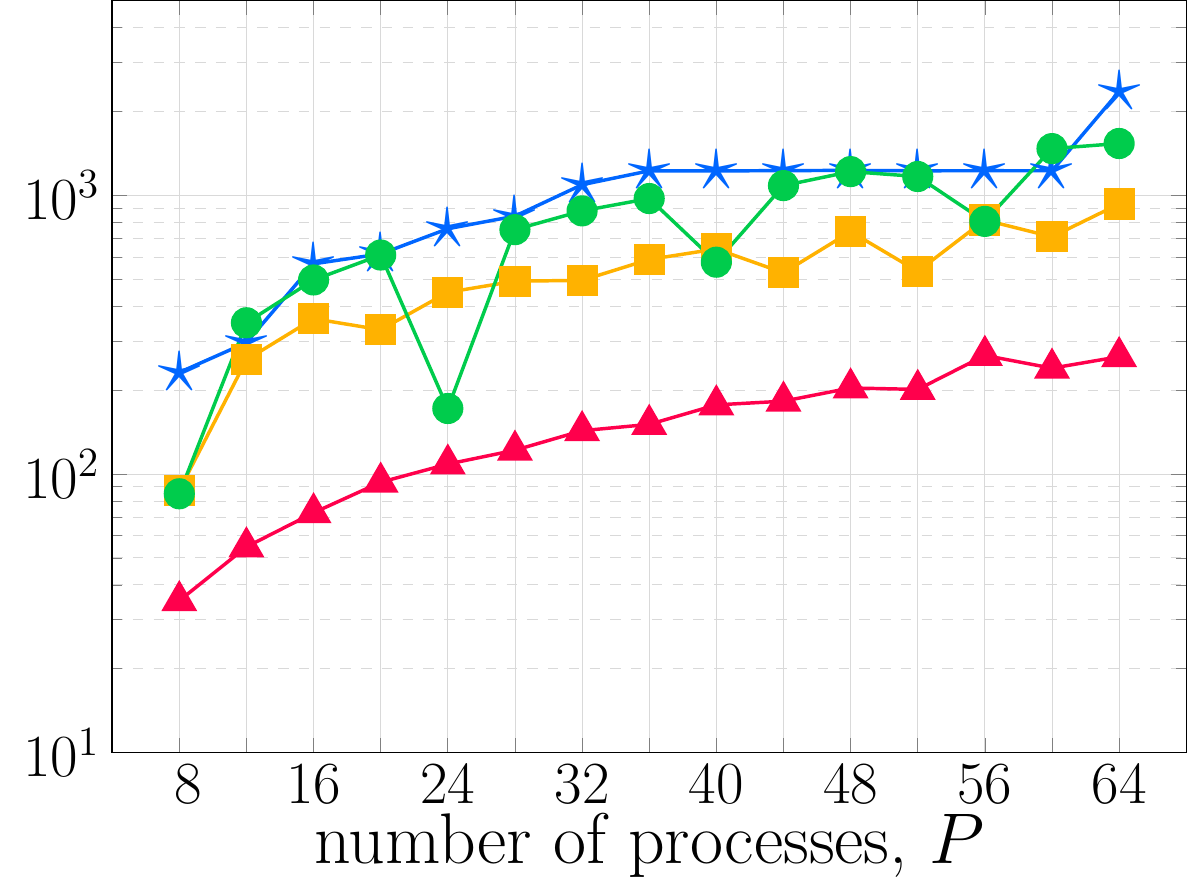}}
    \renewcommand{\thesubfigure}{(f)}
    \subfigure[TPP $\mA,\mB \in \mathbb{R}^{20K\times 20K}$.\label{fig:TPP20k}]{\includegraphics[width=0.5\columnwidth]{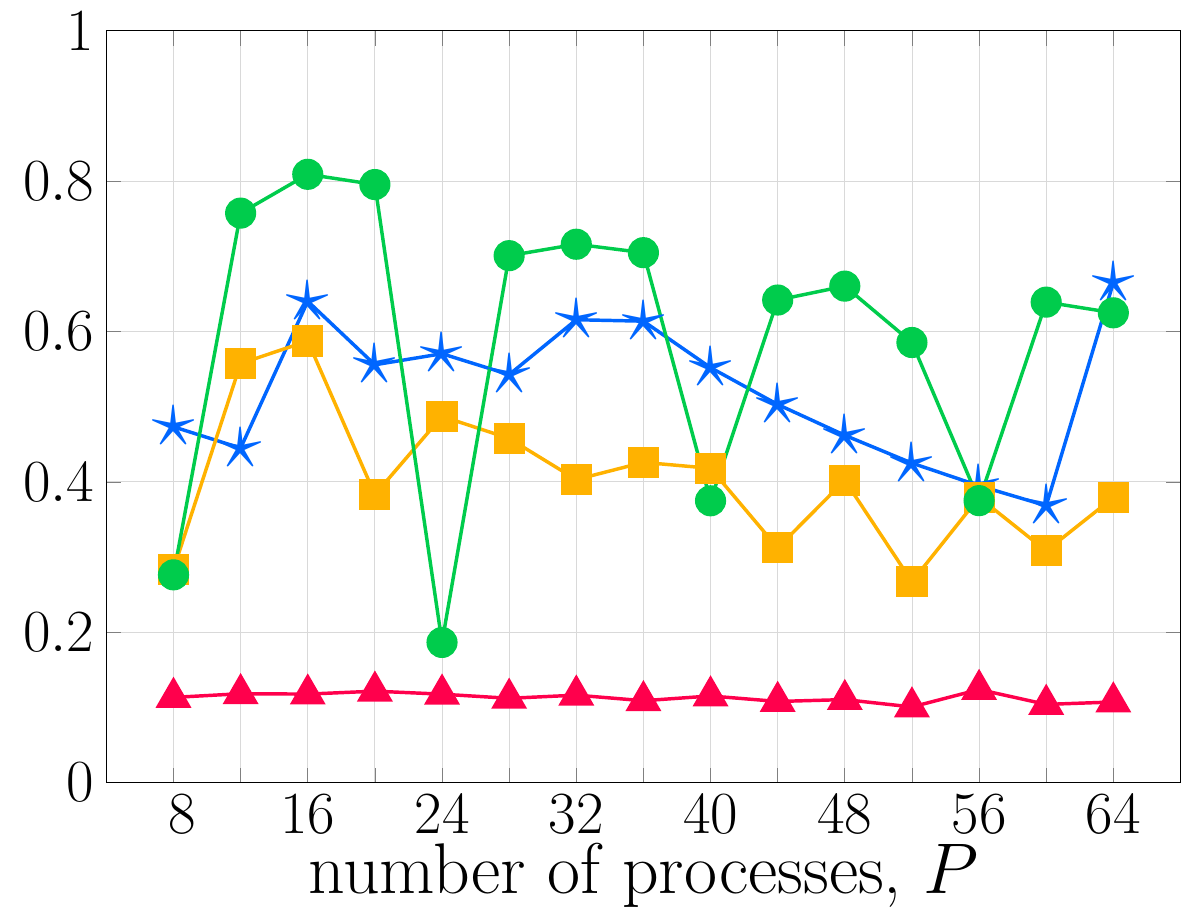}}
    \subfigure{\includegraphics[width=0.4\columnwidth]{imgs/distributed/Legend.pdf}}
    
    \noindent
    \renewcommand{\thesubfigure}{(g)}
    \subfigure[Elapsed time. $\mA,\mB \in \mathbb{R}^{60K\times 5K}$.\label{fig:time60kD}]{\includegraphics[width=0.5\columnwidth]{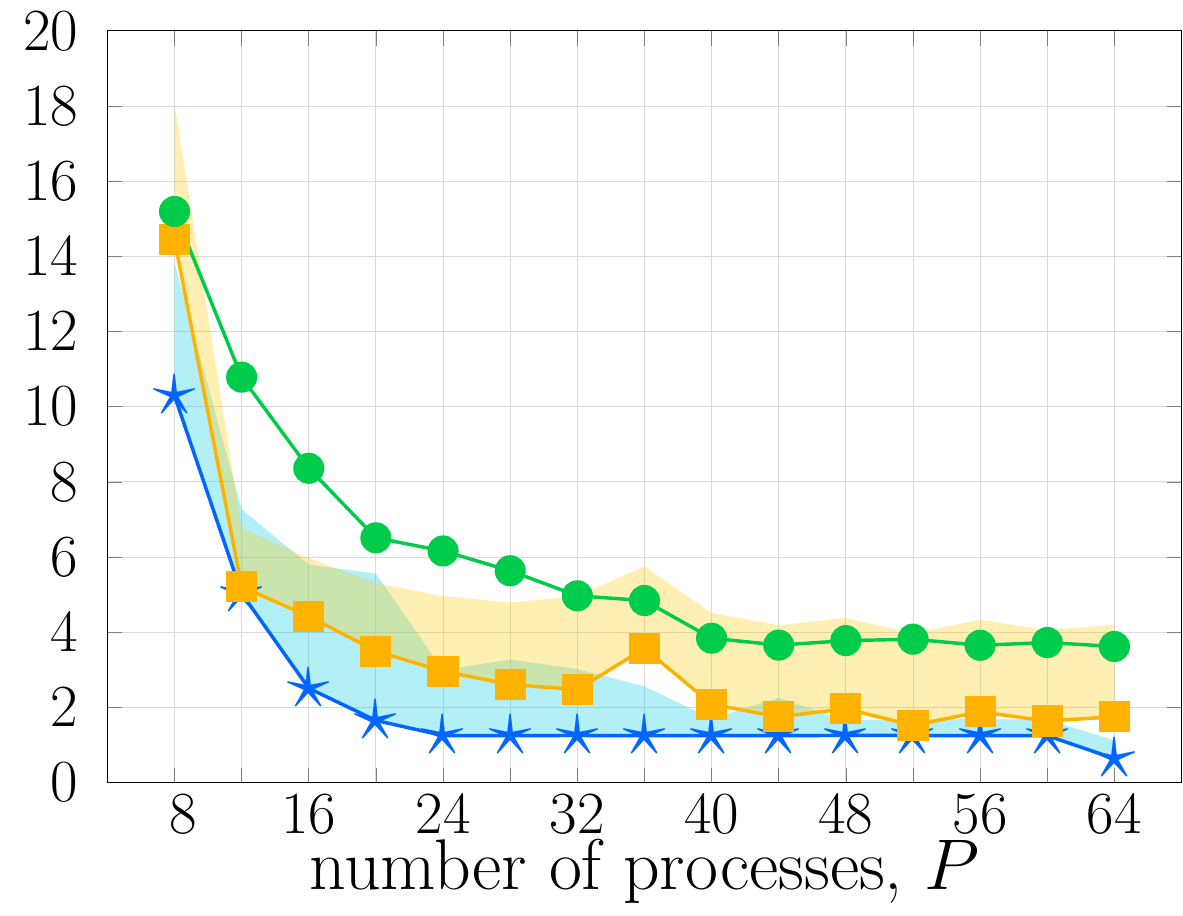}}
    \renewcommand{\thesubfigure}{(h)}
    \subfigure[EGs. $\mA,\mB \in \mathbb{R}^{60K\times 5K}$.\label{fig:EG60kD}]{\includegraphics[width=0.5\columnwidth]{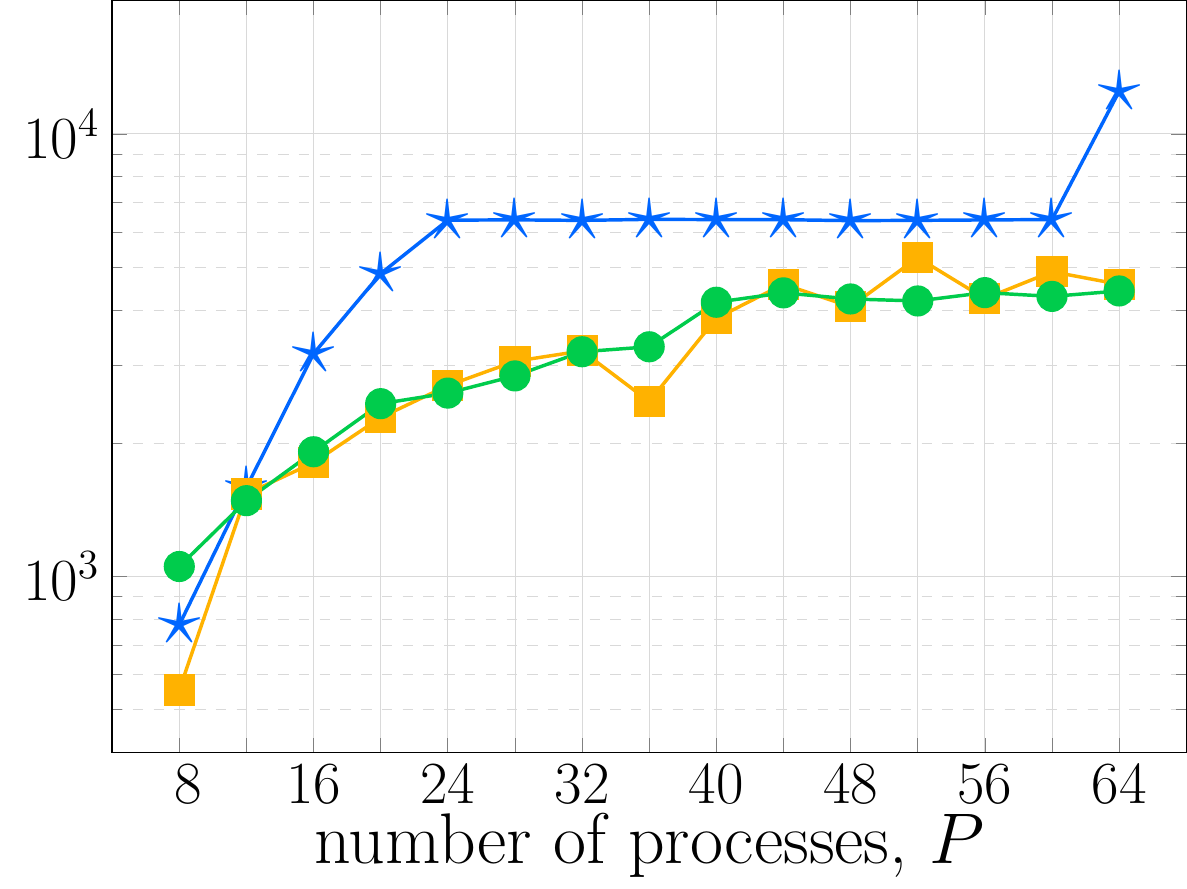}}
    \renewcommand{\thesubfigure}{(i)}
    \subfigure[TPP $\mA,\mB \in \mathbb{R}^{60K\times 5K}$.\label{fig:TPP60k}]{\includegraphics[width=0.5\columnwidth]{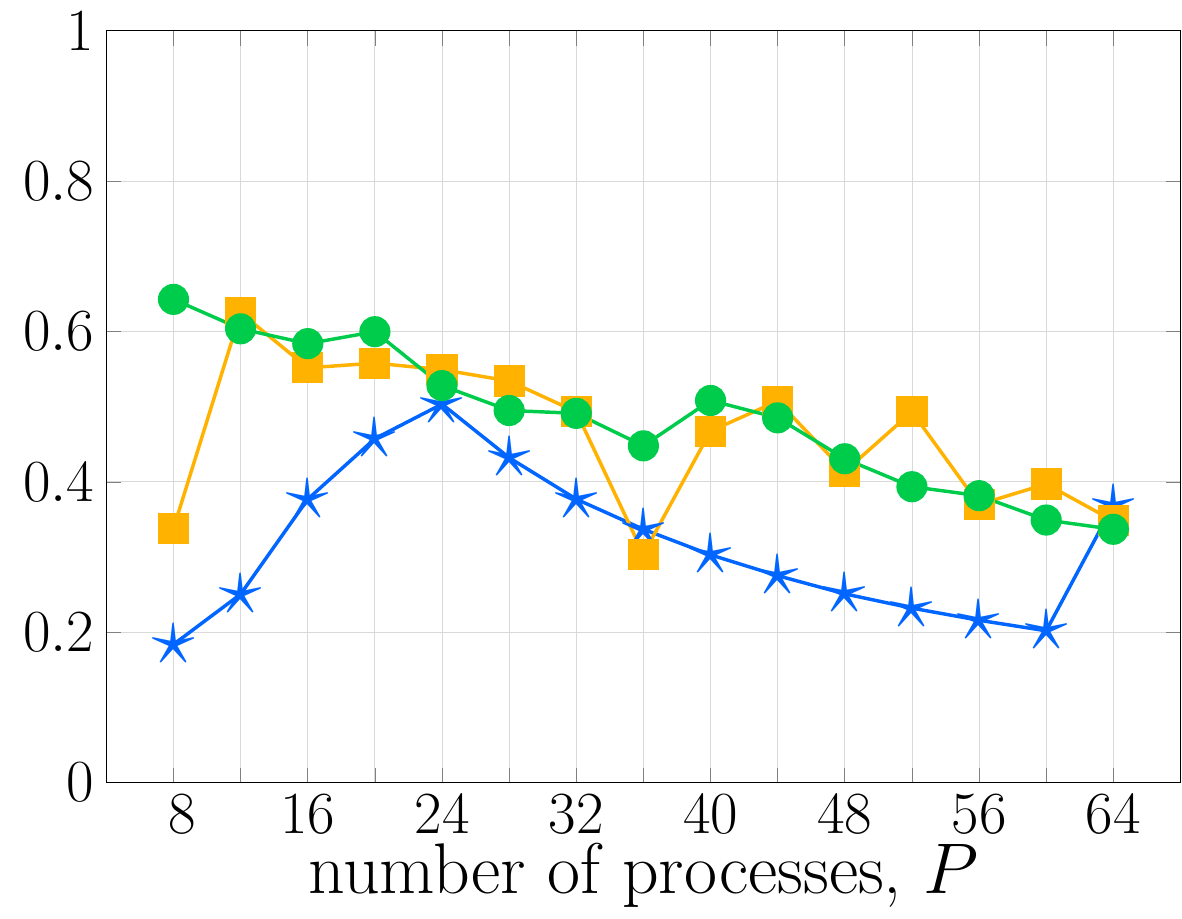}}
    \subfigure{\includegraphics[width=0.4\columnwidth]{imgs/distributed/Legend.pdf}}

    \caption{Experimental results of \ATAP, Intel MKL \texttt{pdsyrk}, CAPS and COSMA in terms of elapsed time in seconds (left column), effective GFLOPs (central column) and \% of theoretical peak (right column) varying the number of distributed processes $P$ on fixed matrix sizes.}
    \label{fig:distributed}
\end{figure*}

\begin{enumerate}
    \item Intel MKL ScaLapack \texttt{p?syrk}:
    the Intel Math Kernel libraries (MKL) provide optimized implementation of ScaLapack routines for high-performance dense Linear Algebra operations on distributed clusters. 
    In ScaLapack, distributed processes are organized in 2D grids of size $m_P\times n_P = P$.
    For each value of $P$, we set optimal $m_P$ and $n_P$ by calling \texttt{MPI\_Dims\_create}. We analyse the execution time required to perform the $\mA^T\mA$ matrix multiplication by the built-in ScaLapack function \texttt{pdsyrk}, and the time to retrieve the result of the operation. 
    
    \item CAPS\footnote{\url{https://github.com/lipshitz/CAPS/}}: the Communication-Optimal Parallel Algorithm for Strassen’s Matrix Multiplication \cite{ballard2012communication} is a distributed algorithm for general square matrix multiplications $\mA\mB$. Soon after CAPS, the same authors proposed CARMA \cite{demmel2013communication}, that also handles rectangular matrices. Nevertheless, it was not possible to test this method as it relies on Cilk Plus, a tool for parallel computing now marked as deprecated\footnote{https://www.cilkplus.org/, Last accessed 07-01-2021}. 
    
    \item COSMA\footnote{\url{https://github.com/eth-cscs/COSMA}}: differently from CAPS, this communication-optimal algorithm for general matrix multiplication does not rely on Strassen's algorithm, instead, it uses red-blue pebble game to precisely model the matrix-multiplication dependencies. In \cite{COSMA}, the authors show that COSMA outperforms all previously proposed frameworks for general matrix multiplication. It also works for multiplication on transposed matrices, and therefore we test it to perform $\mA^T\mB$ products. 
\end{enumerate}


To simulate massively distributed architectures, in our experiments, we reserve only one core per distributed process. 
As a consequence, each process has small memory availability (4GB RAM/core). The results of our experiments for the distributed-memory solution are shown in Figure~\ref{fig:distributed}. In Figures~\ref{fig:time10kD}, \ref{fig:time20kD} and \ref{fig:time60kD}, marked lines represent the compute time of all considered methods. The shaded areas above the curves describing \ATAP and \texttt{pdsyrk}  represent the additional time required for communication, i.e., for retrieving the resulting matrix to the root process. 
We consider two groups of square matrices, having size $10^4$ and $2\cdot 10^4$ (Figures~\ref{fig:time10kD}, \ref{fig:EG10kD}, \ref{fig:TPP10k} and \ref{fig:time20kD}, \ref{fig:EG20kD}, \ref{fig:TPP20k} respectively), and one set of tall matrices of size $6\cdot 10^4 \times 5\cdot 10^3$ (Figures~\ref{fig:time60kD}, \ref{fig:EG60kD}, \ref{fig:TPP60k}). Because CAPS does not operate on rectangular matrices, we could not test it on the latter set of experimental configurations. 
As we can observe from Figure~\ref{fig:distributed}, scalability of \ATAP is nonlinear and it rather follows an almost-stepwise trend with respect to $P$. This is a consequence of Equation~\ref{eq:numLevels}, that shows how some values of $P$ allow for a more effective and balanced workload between processes. This is evident for small values of $P$ (when a greater availability of processes weighs significantly on the workload of each process), as well as for $P=64$.  
Despite the different nature of the parallelism implemented in \ATAP with respect to the benchmark methods analysed in this section, our experiments corroborate the efficiency of the task distribution implemented in \ATAP. In Figures~\ref{fig:TPP10k},~\ref{fig:TPP20k} and \ref{fig:TPP60k} we show the percentage of theoretical peak performance (TPP) for all tested algorithms. We compute it as the effective GFLOPs over the theoretical performance peak of the nodes of our cluster. For all tested methods, the effective GFLOPs are computed as in Equation~\ref{eq:effGFLOPs}, (as
those reported in Figures~\ref{fig:EG10kD}, \ref{fig:EG20kD}, \ref{fig:EG60kD}), except for  \ATAP, for which we now use the complexity of \ATA (Equation~\ref{eq:ATACC}).
Regarding the percentage of theoretical peak, we can see how our algorithm has comparable behaviour with respect to the other solutions on square matrices, but it performs worse on the rectangular case. The high performance of our method relies on careful ordering and placement of highly optimized BLAS routines. However, especially when working on tall matrices, we need to perform several calls to BLAS Level 1 routines (i.e., to compute intermediate sums both in \ATA and \Strassen) and system calls (i.e., memory copies) on very short rows. 
This leads to more memory accesses and poorer vectorization capability than dealing with the same amount of data, distributed in fewer, longer rows, would entail. As a consequence, the overall performance with respect to the theoretical peak is worsened. Furthermore, in Figures~\ref{fig:TPP10k},~\ref{fig:TPP20k} and \ref{fig:TPP60k} we see that \ATAP loses a bit of efficiency on larger matrices. This is due to the fact that each process calling the \Strassen routine needs to allocate $\nicefrac{3}{2}n^2$ space of memory, hence memory handling slows down the entire process. In addition, also in the last column of Figure~\ref{fig:distributed},  we can observe performance peaks after slow degradations as we did in the first two. We stress that this is a consequence of the fact that for some values of $P$, workload among processes is distributed more efficiently (see Equation~\ref{eq:numLevels}). As a matter of fact, the computational complexity of \ATAP decreases exponentially at each level, but the number of levels increases logarithmically with the number of processes, 
$P$. Therefore for numbers of processes that result in the same number of parallel levels, the improvement is less appreciable. \newline \indent Finally, in order to study the scalability of \ATAP with respect to \ATAS, and to validate the possibility of integrating the two methods, we compare \ATAS and \ATAP on very large matrices of increasing size and report results in Table~\ref{tab:shared-vs-distr}. \ATAS works on 16 cores with 16 threads,  whereas \ATAP works on 6 distributed nodes, each equipped with 16 cores, for a total of 96 cores. Each node executes a distributed process calling either \ATAS for $\mA^T\mA$-type products, or multi-threaded MKL \texttt{dgemm} for $\mA^T\mB$-type multiplications. The times reported in Table~\ref{tab:shared-vs-distr} for \ATAP also include communication time (for distributing data and collecting results). Speed-up is computed as $\nicefrac{T_{SM}}{T_{DM}}$, where $T_{SM}$ and $T_{DM}$ are the execution times of the shared and the distributed-memory algorithms, respectively.  
In accordance with our computational and communication cost analysis (Section~\ref{sec:CCatap}), the speed-up of \ATAP over \ATAS increases when the size of the input matrix increases, as the computation cost overwhelms the communication overhead. Furthermore, the shared-memory implementation suffers when considering larger matrices, since frequent memory access slows down execution (two $60K \times 60K$ matrices require 57 GB out of the 64 GB available on the test machine), consequently decreasing performance. This is highlighted by the results of Table~\ref{tab:shared-vs-distr}, where we can observe that $60K \times 60K$ matrices require high computation time, dominated by the time for memory management.

\begin{table}[t!]
    \centering
    \begin{tabular}{ccccc}
$n$
&
SM (16 cores)
&
DM (96 cores)
&
Speed-up
\\
\hline
$30\mathrm{K}$
&
45.16 s
&
21.24 s
&
2.13
\\
$40\mathrm{K}$
&
106.19 s
&
43.96 s
&
2.42
\\
$50\mathrm{K}$
&
221.63 s
&
81.77 s
&
2.71
\\
$60\mathrm{K}$
&
863.32 s
&
129.08 s
&
6.69
\\
\end{tabular}
    \caption{Shared memory (SM) vs distributed memory (DM) $\mA^T\mA$ implementation on large square $n \times n$ matrices.} 
    \label{tab:shared-vs-distr}
    \vspace{-.5cm}
\end{table}

\section{Conclusions}
We propose \ATA, an algorithm for the $\mA^T\mA$ product, that reduces the computational complexity of commonly employed algorithms, and that is conveniently implementable in practice on matrices defined on arbitrary domains and of any size and aspect ratio. The computational cost of \ATA benefits from the fast matrix multiplication introduced by Strassen's algorithm, and is cache-oblivious. We show that \ATA can be efficiently implemented in shared and distributed memory environments. In the shared memory implementation of \ATA, tasks are assigned to parallel threads so that they work in perfect parallelism. Our theoretical analysis is supported by experiments that prove the excellent performance of our implementations in comparison with state-of-the-art counterparts.

\begin{acks}
This work has been partially supported through the ERC Starting Grant n. 802554 (SPECGEO).
\end{acks}

\bibliographystyle{ACM-Reference-Format}
\bibliography{biblio}

\end{document}